\newcommand{\eq}[1]{Eq.~\hyperref[eq:#1]{(\ref*{eq:#1})}}
\renewcommand{\sec}[1]{\hyperref[sec:#1]{Section~\ref*{sec:#1}}}
\newcommand{\app}[1]{\hyperref[app:#1]{Appendix~\ref*{app:#1}}}
\newcommand{\tab}[1]{\hyperref[tab:#1]{Table~\ref*{tab:#1}}}
\newcommand{\fig}[1]{\hyperref[fig:#1]{Figure~\ref*{fig:#1}}}
\newcommand{\figa}[2]{\hyperref[fig:#1]{Figure~\ref*{fig:#1}#2}}
\newcommand{\figx}[2]{\hyperref[fig:#1]{Figure~\ref*{fig:#1}(#2)}}
\newcommand{\thm}[1]{\hyperref[thm:#1]{Theorem~\ref*{thm:#1}}}
\newcommand{\lem}[1]{\hyperref[lem:#1]{Lemma~\ref*{lem:#1}}}
\newcommand{\cor}[1]{\hyperref[cor:#1]{Corollary~\ref*{cor:#1}}}
\newcommand{\defn}[1]{\hyperref[def:#1]{Definition~\ref*{def:#1}}}
\newcommand{\alg}[1]{\hyperref[alg:#1]{Algorithm~\ref*{alg:#1}}}
\def\bra#1{\mathinner{\langle{#1}|}}
\def\ket#1{\mathinner{|{#1}\rangle}}
\newcommand{\braket}[2]{\langle #1|#2\rangle}
\newcommand{\be}{\begin{equation}}
\newcommand{\ee}{\end{equation}}
\newcommand{\ba}{\begin{eqnarray}}
\newcommand{\ea}{\end{eqnarray}}
    \def\CT@@do@color{%
      \global\let\CT@do@color\relax
            \@tempdima\wd\z@
            \advance\@tempdima\@tempdimb
            \advance\@tempdima\@tempdimc
    \advance\@tempdimb\tabcolsep
    \advance\@tempdimc\tabcolsep
    \advance\@tempdima2\tabcolsep
            \kern-\@tempdimb
            \leaders\vrule
    %^^A                     \@height\p@\@depth\p@
                    \hskip\@tempdima\@plus  1fill
            \kern-\@tempdimc
            \hskip-\wd\z@ \@plus -1fill }
\newtheorem{theorem}{Theorem}
\newtheorem{lemma}[theorem]{Lemma}
\newenvironment{proofof}[1]{\begin{trivlist}\item[]{\flushleft\it
Proof of~#1.}}
{\qed\end{trivlist}}
\newcommand{\qw}[1][-1]{\ar @{-} [0,#1]}
\newcommand{\qwx}[1][-1]{\ar @{-} [#1,0]}
\newcommand{\gate}[1]{*+<.6em>{#1} \POS ="i","i"+UR;"i"+UL **\dir{-};"i"+DL **\dir{-};"i"+DR **\dir{-};"i"+UR **\dir{-},"i" \qw}
\newcommand{\meter}{*=<1.8em,1.4em>{\xy ="j","j"-<.778em,.322em>;{"j"+<.778em,-.322em> \ellipse ur,_{}},"j"-<0em,.4em>;p+<.5em,.9em> **\dir{-},"j"+<2.2em,2.2em>*{},"j"-<2.2em,2.2em>*{} \endxy} \POS ="i","i"+UR;"i"+UL **\dir{-};"i"+DL **\dir{-};"i"+DR **\dir{-};"i"+UR **\dir{-},"i" \qw}
\newcommand{\control}{*!<0em,.025em>-=-<.2em>{\bullet}}
\newcommand{\ctrl}[1]{\control \qwx[#1] \qw}
\newcommand{\lstick}[1]{*!R!<.5em,0em>=<0em>{#1}}
\newcommand{\Qcircuit}{\xymatrix @*=<0em>}
\begin{document}

\title{Phase estimation with randomized Hamiltonians}

\date{\today}
\author{Ian D.\ Kivlichan}
\affiliation{Department of Physics, Harvard University, Cambridge, MA 02138}
\affiliation{Department of Chemistry and Chemical Biology, Harvard University, Cambridge, MA 02138}
\author{Christopher E.\ Granade}
\affiliation{Microsoft Research, Redmond, WA 98052}
\author{Nathan Wiebe}
\affiliation{Microsoft Research, Redmond, WA 98052}

\begin{abstract}
Iterative phase estimation has long been used in quantum computing to estimate Hamiltonian eigenvalues. This is done by applying many repetitions of the same fundamental simulation circuit to an initial state, and using statistical inference to glean estimates of the eigenvalues from the resulting data.  
Here, we show a generalization of this framework where each of the steps in the simulation uses a different Hamiltonian.  
This allows the precision of the Hamiltonian to be changed as the phase estimation precision increases. Additionally, through the use of importance sampling, we can exploit knowledge about the ground state to decide how frequently each Hamiltonian term should appear in the evolution, and minimize the variance of our estimate.  
We rigorously show, if the Hamiltonian is gapped and the sample variance in the ground state expectation values of the Hamiltonian terms sufficiently small, that this process has a negligible impact on the resultant estimate and the success probability for phase estimation.
We demonstrate this process numerically for two chemical Hamiltonians, and observe substantial reductions in the number of terms in the Hamiltonian; in one case, we even observe a reduction in the number of qubits needed for the simulation.
Our results are agnostic to the particular simulation algorithm, and we expect these methods to be applicable to a range of approaches.
\end{abstract}
\maketitle

\section{Introduction}
Not all Hamiltonian terms are created equally in quantum simulation.  Hamiltonians that naturally arise from chemistry \cite{aspuru2005simulated,mcardle2018quantum,cao2018quantum}, materials \cite{babbush2018low} and other applications \cite{jordan2012quantum} are often composed of terms that are negligibly small.  These terms are  culled from the Hamiltonian well before it reaches the simulator.  Other terms that are formally present in the Hamiltonian are removed, not because of their norm, but rather because they are not expected to impact quantities of interest.  For example, in quantum chemistry, one usually selects an active space of orbitals and excludes  orbitals outside the active space \cite{helgaker2014molecular}.  This causes many large terms to be omitted from the Hamiltonian.

This process often involves systematically removing terms from the Hamiltonian and simulating the dynamics.  The idea behind such a scheme is to remove terms in the Hamiltonian until the maximum shift allowed in the eigenvalues is comparable to the level of precision needed.  For the case of chemistry, chemical accuracy sets a natural accuracy threshold for such simulations \cite{pople1999nobel}, but in general this precision requirement need not be viewed as a constant \cite{kivlichan2019improved}.

The principal insight of this work is that in iterative phase estimation the number of terms taken in the Hamiltonian should ideally not be held constant.  The reason why is that the high-order bits are mostly irrelevant when one is trying to learn, for example, a given bit of a binary expansion of the eigenphase.  A much lower accuracy simulation can be tolerated than it can when learning a high-order bit.  It then makes sense to adapt the number of terms in the Hamiltonian as iterative phase estimation proceeds through the bits of the phase estimation.  Our work proposes a systematic method for removing terms, and provides formal proofs that such processes need not dramatically affect the results of phase estimation nor its success probability.

The core idea behind our procedure is that we use a form of importance sampling to estimate, a priori, which terms in the Hamiltonian are significant, and from this generate randomized Hamiltonians which approximate the true one.  These randomized Hamiltonians are used within a simulation circuit to prepare approximate ground states.  We then show, using analysis reminiscent of that behind the Zeno effect or the quantum adiabatic theorem, that the errors in the eigenstate prepared at each round of phase estimation need not have a substantial impact on the posterior mean of the eigenphase estimated for the true Hamiltonian. 
This shows, under appropriate assumptions on the eigenvalue gaps, that this process can be used to reduce the time complexity of simulation; under some circumstances, even the space complexity may be reduced by identifying qubits that are not needed for the level of precision required of the simulation.

We proceed by first reviewing iterative phase estimation and Bayesian inference, which we use to quantify the maximal error in the inference of the phase.  
In the appendices, we examine the effect of using a stochastic Hamiltonian on the eigenphases yielded by phase estimation in the simple case where a fixed, but random, Hamiltonian is used at each step of iterative phase estimation.  
Using this result, we generalize to the more complicated case where each repetition of $e^{-iHt}$ in the iterative phase estimation circuit is implemented with a different random Hamiltonian.  
We end the theoretical analysis by showing how randomly sampling the Hamiltonians according to physically motivated importance functions can help to minimize the variance in the estimated phase. We hope this randomization may be applicable in other similar schemes which reduce the cost by reducing the number of simulated terms $e^{-iH_j \Delta t}$ rather than the number of terms $H_j$ in the Hamiltonian itself \cite{childs2018faster,campbell2018random}.
We further show in the appendices that the success probability is not degraded substantially if the eigenvalue gaps of the original Hamiltonian are sufficiently large. 
We end the paper by showing numerical examples of this sampling procedure, and from that conclude that our sampling process for the Hamiltonian can have a substantial impact on the number of terms in the Hamiltonian, and even in some cases the number of qubits used in the simulation.

\section{Iterative Phase Estimation}
The idea behind iterative phase estimation is simple.  We aim to build a quantum circuit that acts as an interferometer wherein the unitary we wish to probe is applied in one of the two branches but not the other.  When the quantum state is allowed to interfere with itself at the end of the protocol, the interference pattern reveals the eigenphase.  This process allows the eigenvalues of $U$ to be estimated within the standard quantum limit, i.e.\ $\Theta(1/\epsilon^2)$ applications of $U$ are needed to estimate the phase within error $\epsilon$.  If the quantum state is allowed to pass repeatedly through the interferometer circuit, or entangled inputs are used, then this scaling can be reduced to the Heisenberg limit $\Theta(1/\epsilon)$ \cite{giovannetti2004quantum,higgins2007entanglement,berry2009how}.  Such a circuit is shown in \fig{PEcircuit}.
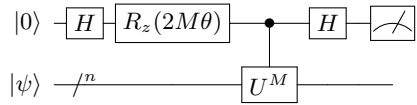
\begin{figure}[t]
\[\Qcircuit @R=1em @C=0.5em {
\lstick{\ket0} & \gate{H} & \gate{R_z(2M\theta)} & \ctrl{1} & \gate{H} & \qw & \meter \\
\lstick{\ket\psi} & {/^n} \qw & \qw & \gate{U^M} & \qw & \qw & \qw 
}\]
\caption{Quantum circuit for performing iterative phase estimation. $M$ is the number of repetitions (not necessarily an integer) of the controlled unitary $U$, and $\theta$ is a phase offset between the ancilla $\ket0$ and $\ket1$ states. $R_z(\varphi) = \exp(-i\varphi Z/2)$ for the Pauli operator $Z$, and $H$ is the Hadamard gate. \label{fig:PEcircuit}
}
\end{figure}

The phase estimation circuit is easy to analyze in the case where $U\ket{\psi} = e^{i\phi}\ket{\psi}$.  If $U$ is repeated $M$ times, and $\theta$ is a phase offset, then the likelihood of a given measurement outcome $o \in \{0, 1\}$ for the circuit in \fig{PEcircuit} is
\begin{equation}
\begin{aligned}
\Pr(o | \phi; M, \theta) = \frac{1 + (-1)^o \cos(M(\phi - \theta))}{2}.\label{eq:likelihood}
\end{aligned}
\end{equation}

There are many free parameters that can be used when designing iterative phase estimation experiments.  
In particular, the rules for generating $M$ and $\theta$ for each experiment vary radically along with the methods used to process the data from these experiments.  
Approaches such as Kitaev's phase estimation algorithm \cite{kitaev1995quantum}, robust phase estimation \cite{kimmel2015robust}, information theory phase estimation \cite{svore2014faster}, or any number of approximate Bayesian methods \cite{wiebe2015efficient,paesani2017experimental}, provide good heuristics for picking these parameters. 
In this work we do not wish to specify to any of these methods for choosing experiments, nor do we wish to focus on the specific data processing methods used.  
Nevertheless, we rely on Bayesian methods to discuss the impact that randomizing the Hamiltonian can have on an estimate of the eigenphase.

Bayes' theorem can be interpreted as giving the correct way to update beliefs about some fact given a set of experimental evidence and prior beliefs.  The initial beliefs of the experimentalist are encoded by a prior distribution $\Pr(\phi)$.  In many cases it is enough to set $\Pr(\phi)$ to be a uniform distribution on $[0,2\pi)$ to represent minimal assumptions about the eigenphase.  However, in quantum simulation broader priors can be chosen if each step in phase estimation uses $U_j=e^{-iHt_j}$ and obeys $U_j\ket{\psi}=e^{-iE_0t_j} \ket{\psi}$ for different $t_j$, since such experiments can learn $E_0$ as opposed to experiments with a fixed $t$ which yield $\phi = E_0 t\text{ mod } 2\pi$.

Bayes' theorem then gives the posterior distribution $\Pr(\phi|o;\phi,M)$ as
\begin{equation}
\Pr(\phi|o;M, \theta) = \frac{\Pr(o|\phi;M,\theta) \Pr(\phi)}{\int \Pr(o|\phi;M,\theta) \Pr(\phi) \,\mathrm{d}\phi}.
\end{equation}
Given a complete data set $(\vec o, \vec M, \vec\theta)$ rather than a single datum, we have that
\begin{equation}
\Pr(\phi | \vec{o};\vec{M},\vec{\theta}) = \frac{\prod_j\Pr(o_j|\phi;M_j,\theta_j) \Pr(\phi)}{\int \prod_j\Pr(o_j|\phi;M_j,\theta_j) \Pr(\phi) \,\mathrm{d}\phi}.
\end{equation}
This probability distribution encodes the experimentalist's entire state of knowledge about $\phi$ given that the data is processed optimally.

It is not customary to return the posterior distribution (or an approximation thereof) as output from a phase estimation protocol.  Instead, a point estimate for $\phi$ is given.  The most frequently used estimate is the maximum a posteriori (MAP) estimate, the $\phi$ with the maximum probability.  While this quantity has a nice operational interpretation, it suffers from several shortcomings for our purposes.  The main drawback here is that the MAP estimate is not robust, in the sense that if two different values of $\phi$ have comparable likelihoods, then small errors in the likelihood can lead to radical shifts in the MAP estimate.  
The posterior mean, $\int \Pr(\phi|\vec{o};\vec{M},\vec{\theta})\phi \,\mathrm{d}\phi$, is a better estimate for this purpose, as it minimizes the mean squared error in any unbiased estimate of $\phi$.  
It also has the property that it is robust to small perturbations in the likelihood---we will make use of this feature we to estimate the impact of our randomization procedure on the results of a phase estimation experiment.

\section{Results}
Our main result is a proof that the Hamiltonian used in iterative phase estimation can be randomized in between steps of iterative phase estimation.
This allows the user to employ heuristics to adaptively change the Hamiltonian as the precision increases, while also potentially reducing the number of terms used within a step of time evolution.
The idea for our procedure is that, rather than applying the controlled unitary $U$ generated by the true Hamiltonian $M$ times, we instead randomly generate a sequence of $M$ different Hamiltonians $\{ H_k \}_{k=1}^M$, and apply the controlled unitary operators defined by this sequence of randomized Hamiltonians in the phase estimation circuit.
The key quantity that we need to bound is the difference between the posterior mean that occurs due to randomizing the terms in the Hamiltonians.  
Note that we have not at this point specified a particular randomization scheme: our bound does not depend on it, except as the particular randomization scheme determines the minimum ground state gap of any of the generated Hamiltonian in the sequence and the maximum difference between any two consecutive Hamiltonians. 
We show this below, before elaborating on the randomization approach we use.
\begin{restatable}{theorem}{main}
Consider a sequence of Hamiltonians $\{ H_k \}_{k=1}^M$, $M>1$.
Let $\gamma$ be the minimum gap between the ground and first excited energies of any of the Hamiltonians, $\gamma = \min_k (E_1^k - E_0^k)$. Similarly, let $\lambda = \max_k \|H_k - H_{k-1}\|$ be the maximum difference between any two Hamiltonians in the sequence. The maximum error in the estimated eigenphases of the unitary found by the products of these $M$ Hamiltonians is at most 
$$
|\phi_{est} - \phi_{true}|\le \frac{2M\lambda^2}{(\gamma-2\lambda)^2},
$$
with a probability of failure of at most $\epsilon$ provided that
$$
\frac{ \lambda }{ \gamma } < \sqrt{1 - \exp\left( \frac{\log(1-\epsilon)}{ M-1 } \right)}.
$$
\label{thm:main}
\end{restatable}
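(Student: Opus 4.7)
The plan is to show that an initial state prepared close to the ground state of $H_1$ tracks the ground state of the current Hamiltonian throughout the product evolution $\prod_k e^{-iH_k t}$, so that the phase accumulated (and hence extracted by iterative phase estimation) is close to $\sum_k E_0^k t$. The argument has a flavor very similar to a discretized adiabatic theorem: a small per-step perturbation and a large spectral gap imply small per-step leakage into excited states, which we accumulate over $M$ steps via a telescoping bound.

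The first step is a perturbative bound on how much consecutive ground states differ. Since $\|H_k - H_{k-1}\| \le \lambda$, Weyl's inequality implies that the spectrum of $H_k$ is within $\lambda$ of that of $H_{k-1}$, so the gap of $H_k$ is at least $\gamma - 2\lambda$ (we will need $\lambda<\gamma/2$, which follows from the hypothesis on $\lambda/\gamma$). Applying the Davis--Kahan $\sin\theta$ theorem (equivalently, first-order perturbation theory for the ground-state projector $\Pi_k^{(0)}$) gives $\|\Pi_k^{(0)} - \Pi_{k-1}^{(0)}\| \le \lambda/(\gamma-2\lambda)$, and hence
\[
|\langle E_0^k | E_0^{k-1}\rangle|^2 \;\ge\; 1 - \frac{\lambda^2}{(\gamma-2\lambda)^2} \;\ge\; 1 - \frac{\lambda^2}{\gamma^2}.
\]

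The second step converts this into a failure-probability bound. Writing the state after step $k-1$ in the eigenbasis of $H_k$, the amplitude on $|E_0^k\rangle$ is at least $|\langle E_0^k|E_0^{k-1}\rangle|$ up to the small leakage the state has already accumulated, and the probability that the full trajectory stays in the ground-state manifold is at least $\prod_{k=2}^{M}|\langle E_0^k|E_0^{k-1}\rangle|^2 \ge (1-\lambda^2/\gamma^2)^{M-1}$. Setting this $\ge 1-\epsilon$ and solving for $\lambda/\gamma$ yields exactly the stated condition. The third step handles the phase error: conditioned on the success event, after applying $e^{-iH_k t}$ the dominant amplitude picks up phase $E_0^k t$, while the rotated-in excited components contribute a perturbative phase shift of order $\lambda^2/(\gamma-2\lambda)^2$ per step. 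Writing the product unitary's action on $|E_0^1\rangle$ as $e^{-i\sum_k E_0^k t}|E_0^M\rangle$ plus a correction of norm $O(M\lambda^2/(\gamma-2\lambda)^2)$, and then using the fact that the posterior mean (as noted in the paper's earlier discussion of the MAP vs.\ posterior-mean estimate) is robust to small perturbations of the likelihood, gives the bound $|\phi_{\mathrm{est}}-\phi_{\mathrm{true}}| \le 2M\lambda^2/(\gamma-2\lambda)^2$, where the factor of two tracks both the amplitude loss and the perturbative phase shift.

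The main obstacle I expect is the last step: cleanly relating the amplitude-squared leakage at each step to an additive phase error of the \emph{product} unitary's eigenphase. The subtlety is that $|E_0^1\rangle$ is not literally an eigenstate of $\prod_k e^{-iH_k t}$, so one must either construct a nearby exact eigenstate of this product perturbatively (showing it has eigenphase within $O(M\lambda^2/(\gamma-2\lambda)^2)$ of $\sum_k E_0^k t$), or argue directly about the posterior-mean phase returned by phase estimation acting on the approximate eigenstate. Either route requires careful accounting so that errors accumulate additively across the $M$ steps rather than multiplicatively, which is where the gap condition and the Weyl bound $\gamma-2\lambda$ do the heavy lifting.
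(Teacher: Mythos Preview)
Your proposal is correct and follows essentially the same route as the paper: first-order perturbation theory (your Davis--Kahan invocation is equivalent to the paper's direct computation) to bound $|\langle\psi_0^k|\psi_0^{k-1}\rangle|^2$, a product over the $M-1$ transitions for the failure-probability condition, and a per-step phase-error bound of order $\lambda^2/(\gamma-2\lambda)^2$ accumulated additively to $2M\lambda^2/(\gamma-2\lambda)^2$. On the step you flag as the obstacle, the paper's device is to introduce an explicit ``adiabatic'' comparison unitary $U_{k,\mathrm{ad}}=\sum_\ell |\psi_\ell^{k+1}\rangle\langle\psi_\ell^k|\,e^{-iE_\ell^k\Delta t}$ and bound $\|(U_k-U_{k,\mathrm{ad}})P_0^k\|\le 2\lambda^2/(\gamma-2\lambda)^2$, then telescope; it does \emph{not} route through the posterior-mean robustness lemma for this particular bound, so the first of the two options you sketch in your final paragraph is the one the paper actually takes.
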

The proof is contained in \app{phaseshift}. From this perspective, our randomization process can be seen as a generalized form of the term-elimination processes employed in previous quantum chemistry simulation work.
The fundamental idea behind our approach to eliminating Hamiltonian terms is importance sampling.  This approach has already seen great use in coalescing  \cite{wecker2014gate,poulin2015trotter}, but we use it slightly differently here.  The idea behind importance sampling is to reduce the variance in the mean of a quantity by reweighting the sum.  Specifically, we can write the mean of $N$ numbers $F(j)$ as
\begin{equation}
\frac{1}{N} \sum_j F(j) = \sum_j f(j)\frac{F(j)}{Nf(j)},
\end{equation}
where $f(j)$ is the importance assigned to a given term.   
This shows that we can view the initial unweighted average as the average of a reweighted quantity $x_j / (N f(j))$. 
While this does not have an impact on the mean of $x_j$, it can dramatically reduce the sample variance of the mean and thus is widely used in statistics to provide more accurate estimates of means.  
This motivates our approach to constructing sampled Hamiltonians: for a Hamiltonian given as a sum of terms $H = \sum_j H_j$, we assign to each term $H_j$ a (normalized) importance $f(j)$.  We then randomly construct the sequence of Hamiltonians $H_k$ as a sum of $N$ terms,
\begin{equation}
H_k = \frac{1}{N} \sum_{i=1}^N H_{\ell_i},
\end{equation}
where each $H_{\ell_i} = H_j / f(j)$ in the sum is sampled randomly independently from the different terms in the original Hamiltonian with probability given by the importance $f(j)$.
The mean of each $H_k$ is $H$. However, in (for example) Trotter-Suzuki-based simulation, the complexity depends on the number of terms \cite{poulin2015trotter}: the Hamiltonians generated in this way can have at most as many unique terms (different Pauli strings) as the original Hamiltonian but in many cases will have fewer. 
Returning to importance sampling, the optimal importance function to take is $f(j)\propto |x_j|$: in such cases, it is straightforward to see that the variance of the resulting distribution is in fact zero if the sign of the $x_j$ is constant.  A short calculation shows that this optimal variance is
\begin{equation}
\mathbb{V}_{f_{\rm opt}} = (\mathbb{E}(|F|))^2-(\mathbb{E}(F))^2. \label{eq:optVar}
\end{equation}

The optimal variance in \eq{optVar} is in fact zero if the sign of the numbers is constant.  While this may seem surprising, it becomes less mysterious by noting that in order to compute the optimal importance function one needs the ensemble mean one seeks to estimate.  This would defeat the purpose of importance sampling in most cases.  Thus, if we want to glean an advantage from importance sampling for Hamiltonian simulation, it is important to show that we can use it even with an inexact importance function that can be, for example, computed efficiently using a classical computer.
We show this robustness holds below.

\begin{lemma}\label{lem:robust}
Let $F:\mathbb{Z}_N \mapsto \mathbb{R}$  and let $\widetilde{F}:\mathbb{Z}_N\mapsto \mathbb{R}$ be a function such that for all $j$, $|\widetilde{F}(j)|-|F(j)|=\delta_j$ with $|\delta_j| \le |F(j)|/2$.  The variance from estimating $\mathbb{E}(F)$ using an importance function $f(j) = |\widetilde{F}(j)| / \sum_k |\widetilde{F}(k)|$ is
$$
\mathbb{V}_f (F) \le  \frac{4}{N^2}\left(\sum_k|\delta_k| \right)\left(\sum_j |F(j)|\right)+\mathbb{V}_{f_{\rm opt}}(F).
$$
\end{lemma}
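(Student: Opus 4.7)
The plan is to start by writing the importance-sampling variance in closed form. The estimator is $X = F(j)/(Nf(j))$ with $j$ drawn according to $f$, so
\[
\mathbb{V}_f(F) = \frac{1}{N^2} \sum_j \frac{F(j)^2}{f(j)} - \left(\mathbb{E}(F)\right)^2.
\]
Plugging in the optimal $f_{\rm opt}(j) = |F(j)|/T$ with $T = \sum_k |F(k)|$ immediately recovers \eq{optVar} as $T^2/N^2 - (\mathbb{E}(F))^2$, while plugging in $f(j) = |\widetilde{F}(j)|/S$ with $S = \sum_k |\widetilde{F}(k)|$ yields $\mathbb{V}_f(F) = (S/N^2) \sum_j F(j)^2/|\widetilde{F}(j)| - (\mathbb{E}(F))^2$. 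The hypothesis $|\delta_j| \le |F(j)|/2$ guarantees $|\widetilde{F}(j)| \ge |F(j)|/2$, so wherever $F(j)\ne 0$ the denominator is nonzero, and on the zero set both sides naturally vanish with the convention $F(j)^2/f(j)=0$.

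Next I would subtract the two variances, so that the $(\mathbb{E}(F))^2$ terms cancel and I am left with $\mathbb{V}_f - \mathbb{V}_{f_{\rm opt}} = N^{-2}\bigl[S \sum_j F(j)^2/|\widetilde{F}(j)| - T^2\bigr]$. Applying the algebraic identity $a^2/(a+\delta) = a - a\delta/(a+\delta)$ with $a = |F(j)|$, $\delta = \delta_j$ converts this into
\[
\mathbb{V}_f(F) - \mathbb{V}_{f_{\rm opt}}(F) = \frac{1}{N^2}\left[T(S-T) - S \sum_j \frac{|F(j)|\, \delta_j}{|\widetilde{F}(j)|}\right].
\]
Since $S - T = \sum_k \delta_k$, the triangle inequality bounds the bracket by $T \sum_k |\delta_k| + S \sum_j |F(j)| |\delta_j|/|\widetilde{F}(j)|$.

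The final step is to feed the perturbation assumption back in twice. First, $|\widetilde{F}(j)| \ge |F(j)|/2$ gives $|F(j)|/|\widetilde{F}(j)| \le 2$, so $S \sum_j |F(j)||\delta_j|/|\widetilde{F}(j)| \le 2 S \sum_j |\delta_j|$. Second, $\sum_k |\delta_k| \le \tfrac12 \sum_k |F(k)| = T/2$, which gives $S \le T + \sum_k |\delta_k| \le 3T/2$. Assembling, the bracket is at most $T\sum_k|\delta_k| + 3T \sum_j |\delta_j| = 4T \sum_k |\delta_k|$, which dividing by $N^2$ is exactly the stated bound.

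The main obstacle I anticipate is not conceptual but purely bookkeeping: writing the subtraction cleanly so the awkward $F(j)^2/|\widetilde{F}(j)|$ terms become linear in $\delta_j$ via the $a^2/(a+\delta)$ identity, and then being careful that the one-sided bound $|\widetilde F(j)|\ge |F(j)|/2$ is applied only after the triangle inequality (so that the factor of $2$ appears in exactly one place and contributes the eventual constant $4$). A minor sanity check is that when $\delta_j \equiv 0$ the bound collapses to $\mathbb{V}_f = \mathbb{V}_{f_{\rm opt}}$, as it should.
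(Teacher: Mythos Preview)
Your proof is correct and follows essentially the same approach as the paper: both compute $\mathbb{V}_f(F)$ explicitly as $N^{-2}S\sum_j F(j)^2/|\widetilde F(j)| - (\mathbb{E}F)^2$, then use the hypothesis $|\delta_j|\le |F(j)|/2$ together with the triangle inequality to extract the $4N^{-2}(\sum_k|\delta_k|)(\sum_j|F(j)|)$ term. The only cosmetic difference is that the paper linearizes via $F(j)^2/(|F(j)|-|\delta_j|)\le |F(j)|(1+2|\delta_j|/|F(j)|)$ and then expands the resulting product, whereas you first subtract $\mathbb{V}_{f_{\rm opt}}$ and apply the exact identity $a^2/(a+\delta)=a-a\delta/(a+\delta)$ before bounding; both routes use the same two ingredients ($|\widetilde F(j)|\ge |F(j)|/2$ and $S\le 3T/2$) and land on the same constant $4$.
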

We include the proof of this lemma in \app{importancesampling}. This bound is tight in the sense that as $\max_k |\delta_k|\rightarrow 0$ the upper bound on the variance converges to $\left(\mathbb{E}(|F|)\right)^2- \left(\mathbb{E}(F)\right)^2$ which is the optimal attainable variance.  
The key point this shows is that if an inexact importance function is used, then the variance varies smoothly with the error in the function.

In applications such as quantum chemistry simulation, our goal is to minimize the variance in~\lem{robust}.  
Then, this minimum variance can in principle be attained by choosing $f(j) \propto |\bra{\psi} H_j \ket{\psi}|$ for the term $H_j$ in the original Hamiltonian $H = \sum_j H_j$, where $\ket{\psi}$ is the eigenstate of interest.  
However, the task of computing this is at least as hard as solving the original eigenvalue estimation problem.  
The natural approach is to take inspiration from~\lem{robust}, and instead approximate the ideal importance function by taking $f(j) \propto |\!\bra{\widetilde{\psi}} H_j \ket{\widetilde{\psi}}\!|$, where $\ket{\widetilde{\psi}}$ is an efficiently computable ansatz state from e.g.\ truncated configuration interaction methods \cite{helgaker2014molecular}.  
In practice, however, the importance of a given term may not be entirely predicted by the ansatz prediction.  In this case, we can instead use a hedging strategy where for some $\rho \in [0,1]$, $f(j) \propto (1-\rho) \langle H_j \rangle + \rho \| H_j \|$.  
(The $\rho=1$ case is more comparable to recent work by Campbell \cite{campbell2018random}, though our approach reduces the cost of simulation by changing the Hamiltonian itself rather than randomizing the order in which terms are simulated as in that work; we hope that the importance sampling ideas discussed here may find application within or alongside that method.)  
This strategy allows us to smoothly interpolate between importance dictated by the magnitude of the Hamiltonian terms as well as the expectation value in the surrogate for the ground state.

\section{Numerical results}
Our work has shown that it is possible to use iterative phase estimation using a randomized Hamiltonian, but we have not discussed how effective this is in practice.  
We consider two examples of diatomic molecules, dilithium (\fig{dilithium}) and hydrogen chloride (\fig{hcl}).  In both cases, the molecules are prepared in a minimal STO-3G basis, and we use states found by variationally minimizing the ground state energy over all states within two excitations from the Hartree-Fock state (configuration interaction singles and doubles, or CISD) \cite{helgaker2014molecular}.  
We then randomly sample varying numbers of Hamiltonians terms for varying values of $\rho$, with the expectation value in the importance function $f(j) \propto (1-\rho) \langle H_j\rangle + \rho \|H_j \|$ is taken with respect to the CISD state. 
We examine several quantities of interest, including the average ground state energy, the variance in the ground state energies, and the average number of terms in the Hamiltonian.  Interestingly, we also look at the number of qubits present in the model.  
This can vary because some randomly sampled Hamiltonians will actually choose terms in the Hamiltonian that do not couple with the remainder of the system.  
In these cases, the number of qubits required to represent the state can in fact be lower than the total number that would be ordinarily required.

\begin{figure}[ht]
\centering
\subfloat[][]{
\includegraphics[width=0.36\textwidth]{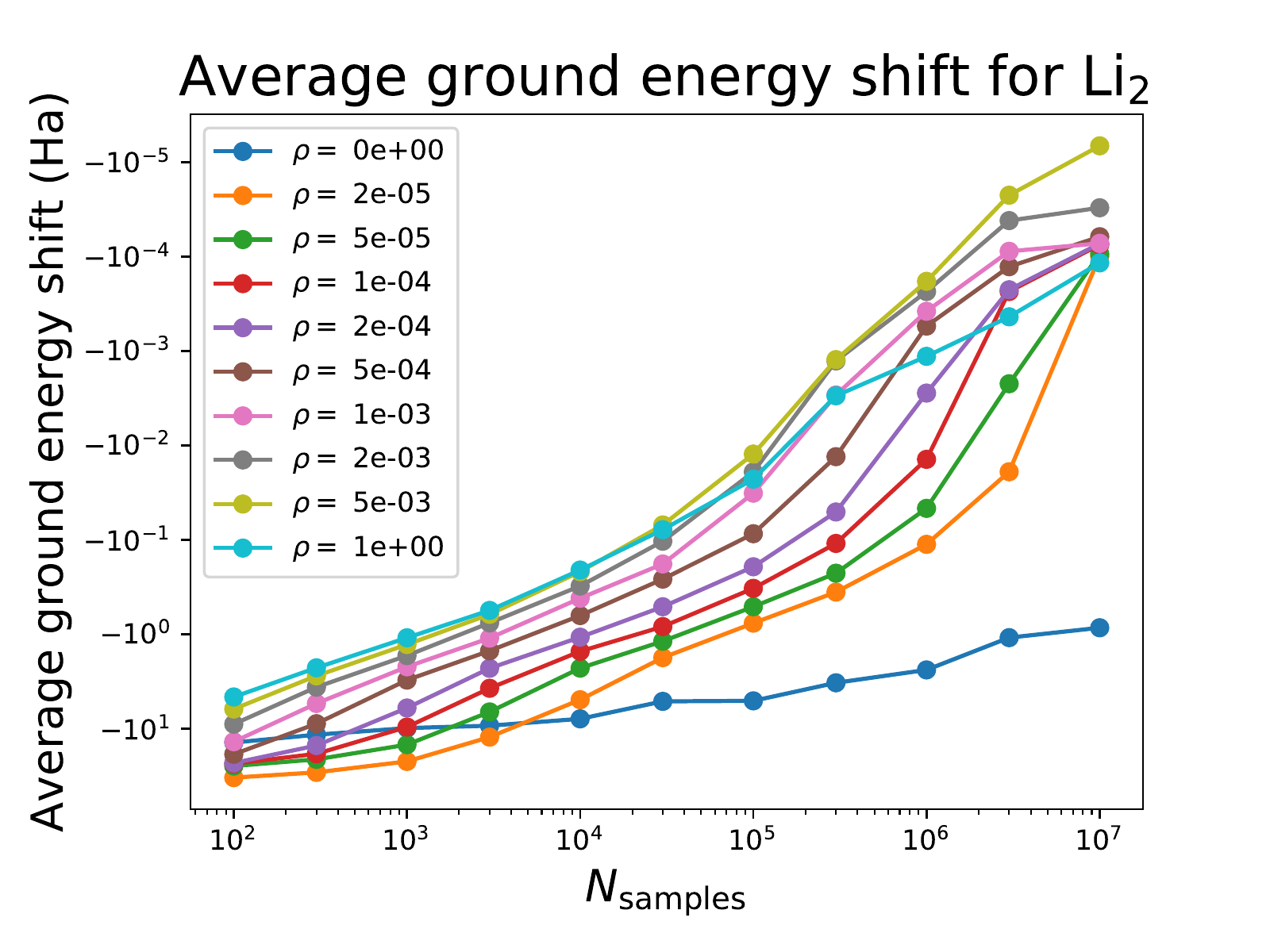}
\label{fig:Li2_avgshift}
}
\subfloat[][]{
\includegraphics[width=0.36\textwidth]{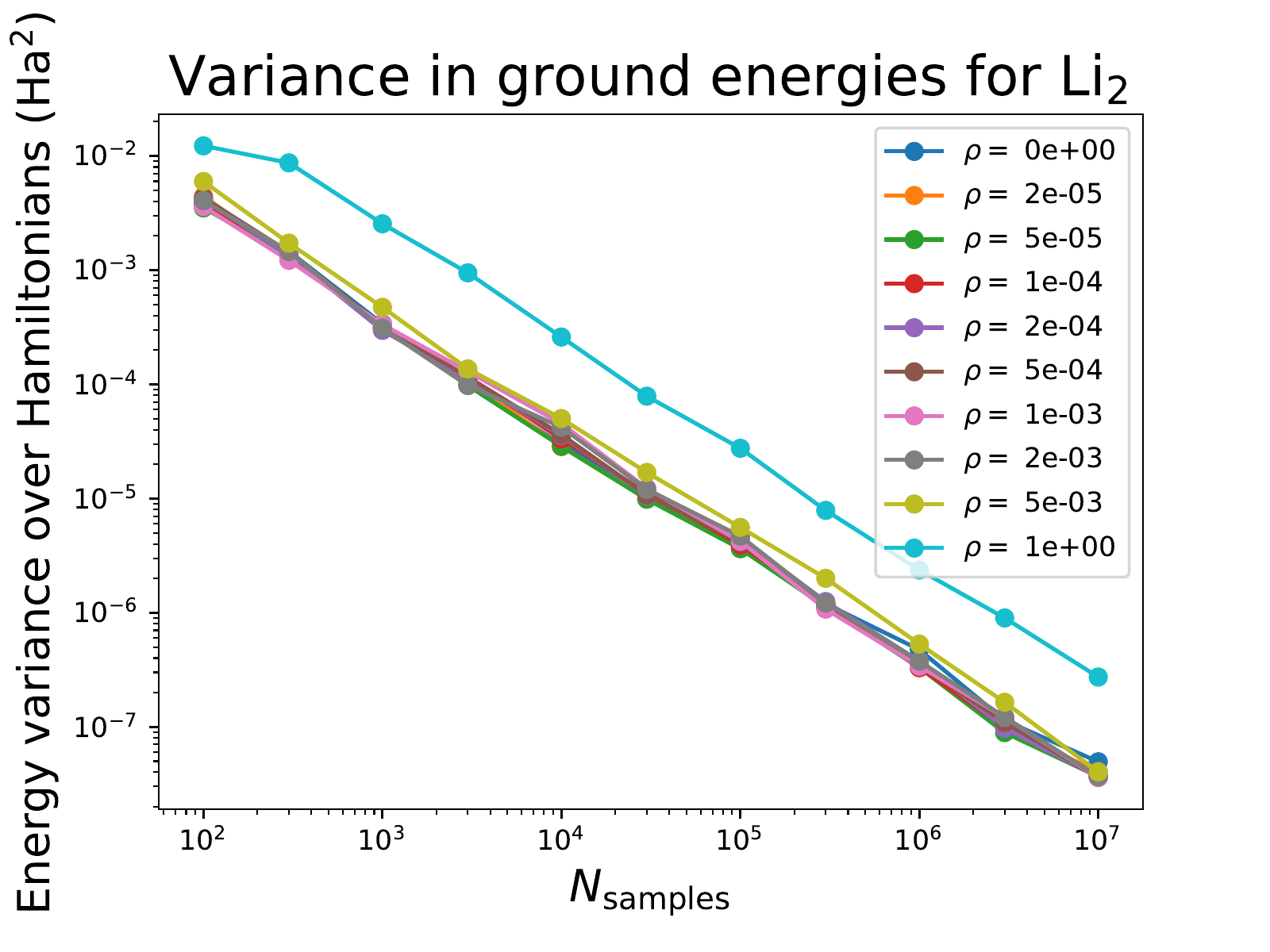}
\label{fig:Li2_var}
}
\\
\subfloat[][]{
\includegraphics[width=0.36\textwidth]{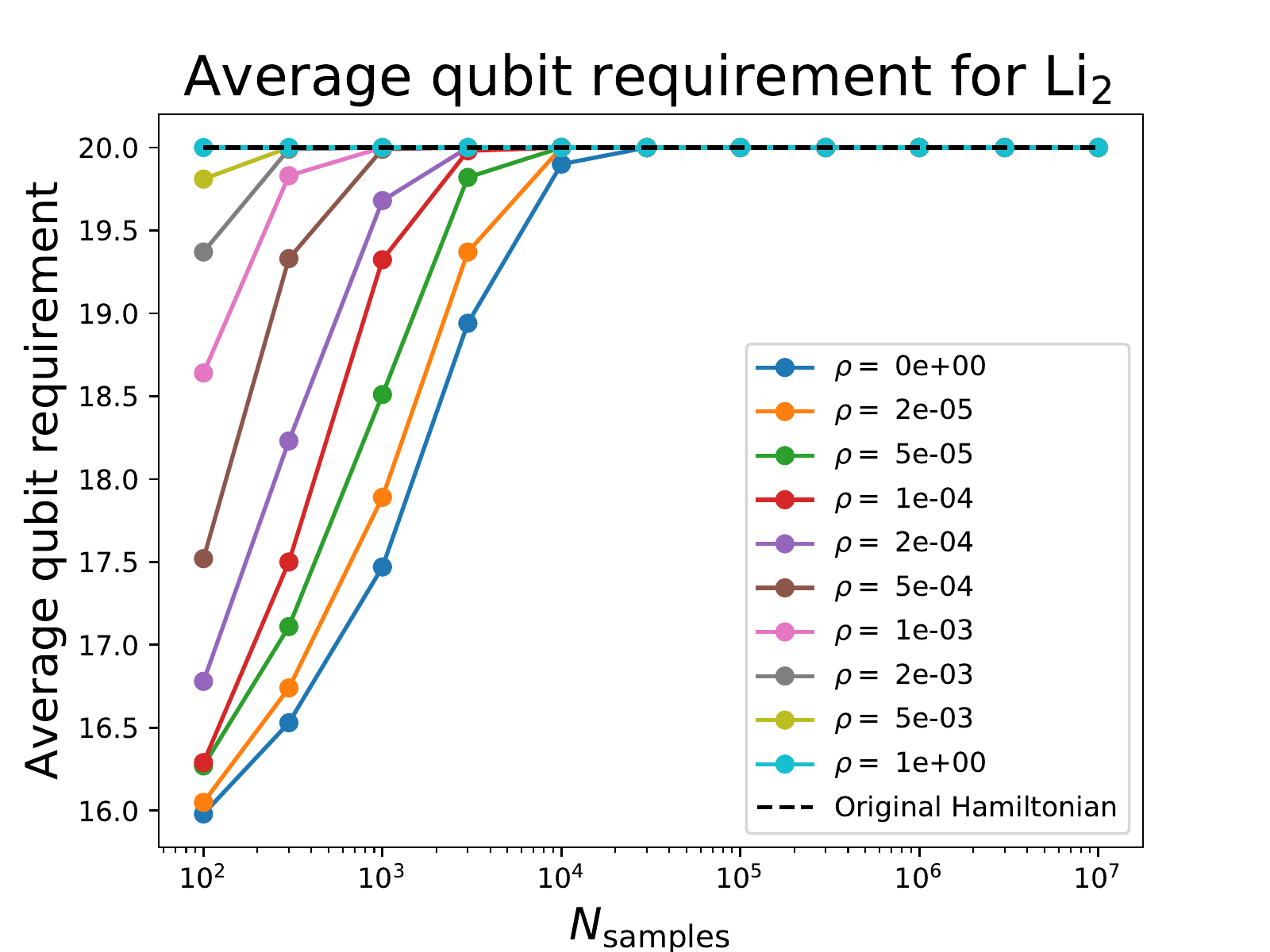}
\label{fig:Li2_avgqubits}
}
\subfloat[][]{
\includegraphics[width=0.36\textwidth]{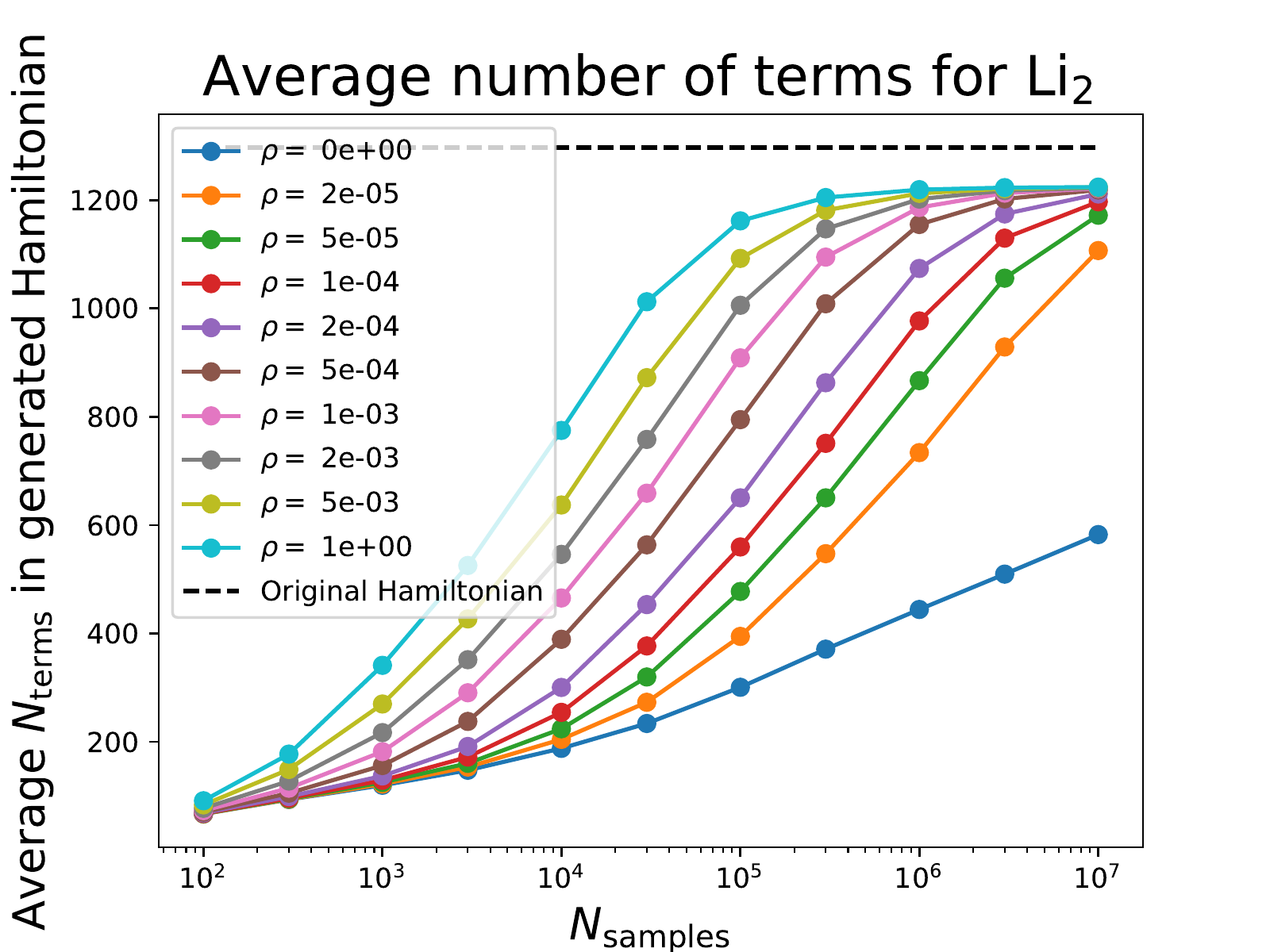}
\label{fig:Li2_avgterms}
}
\caption{\protect\subref{fig:Li2_avgshift} Average ground energy shift (compared to unsampled Hamiltonian), \protect\subref{fig:Li2_var} variance in ground energies over sampled Hamiltonians, \protect\subref{fig:Li2_avgqubits} average qubit requirement, and \protect\subref{fig:Li2_avgterms} average number of terms in sampled Hamiltonians for Li$_2$, as a function of number of samples taken to generate the Hamiltonian and the value of the parameter $\rho$. A term in the Hamiltonian $H_\alpha$ is sampled with probability $p_\alpha \propto (1-\rho) \langle H_\alpha \rangle + \rho \| H_\alpha \|$, where the expectation value is taken with the CISD state. 100 sampled Hamiltonians were randomly generated and numerically diagonalized for each data point. \label{fig:dilithium}}
\end{figure}

\begin{figure}[ht]
\centering
\subfloat[][]{
\includegraphics[width=0.36\textwidth]{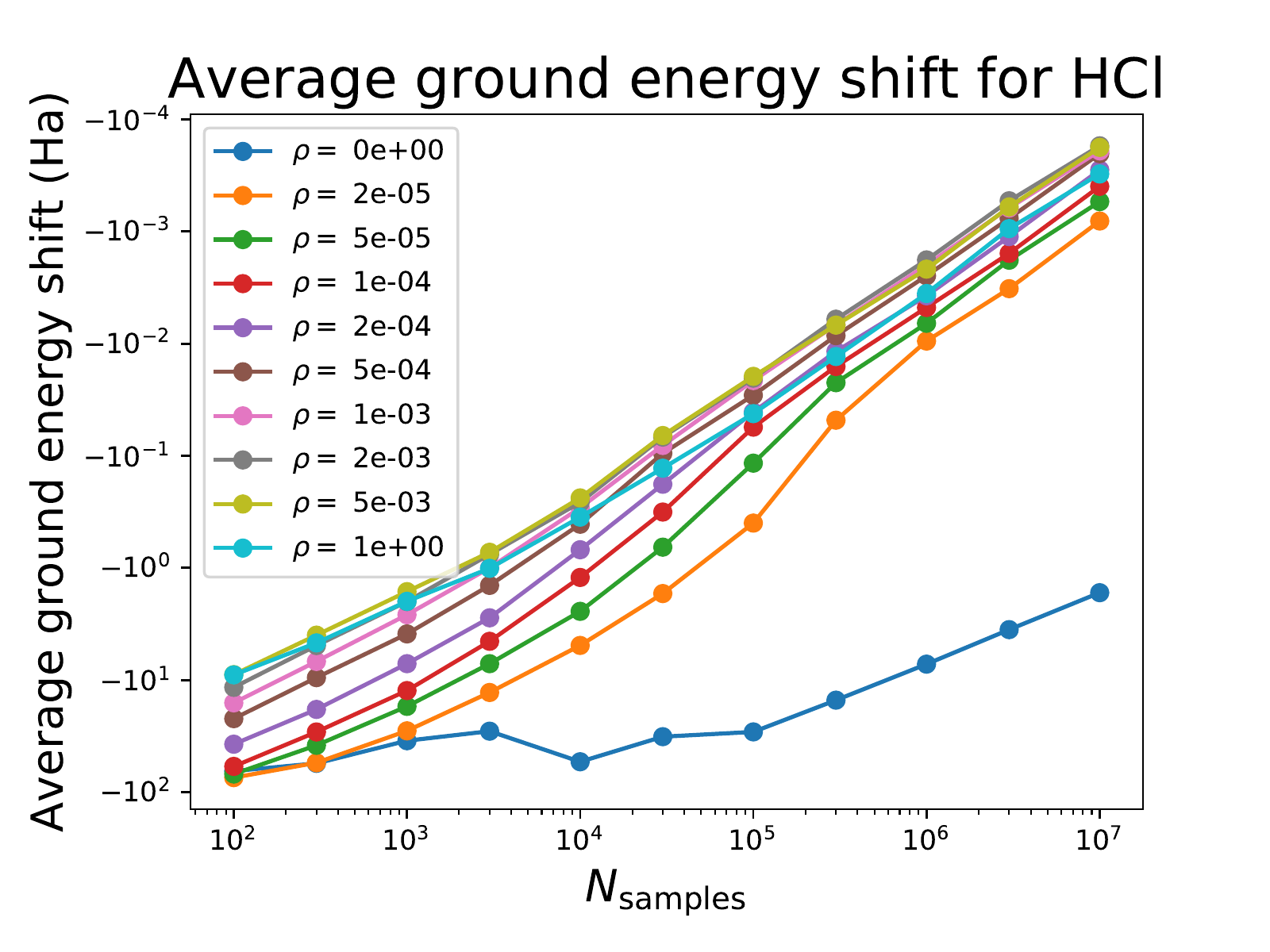}
\label{fig:HCl_avgshift}
}
\subfloat[][]{
\includegraphics[width=0.36\textwidth]{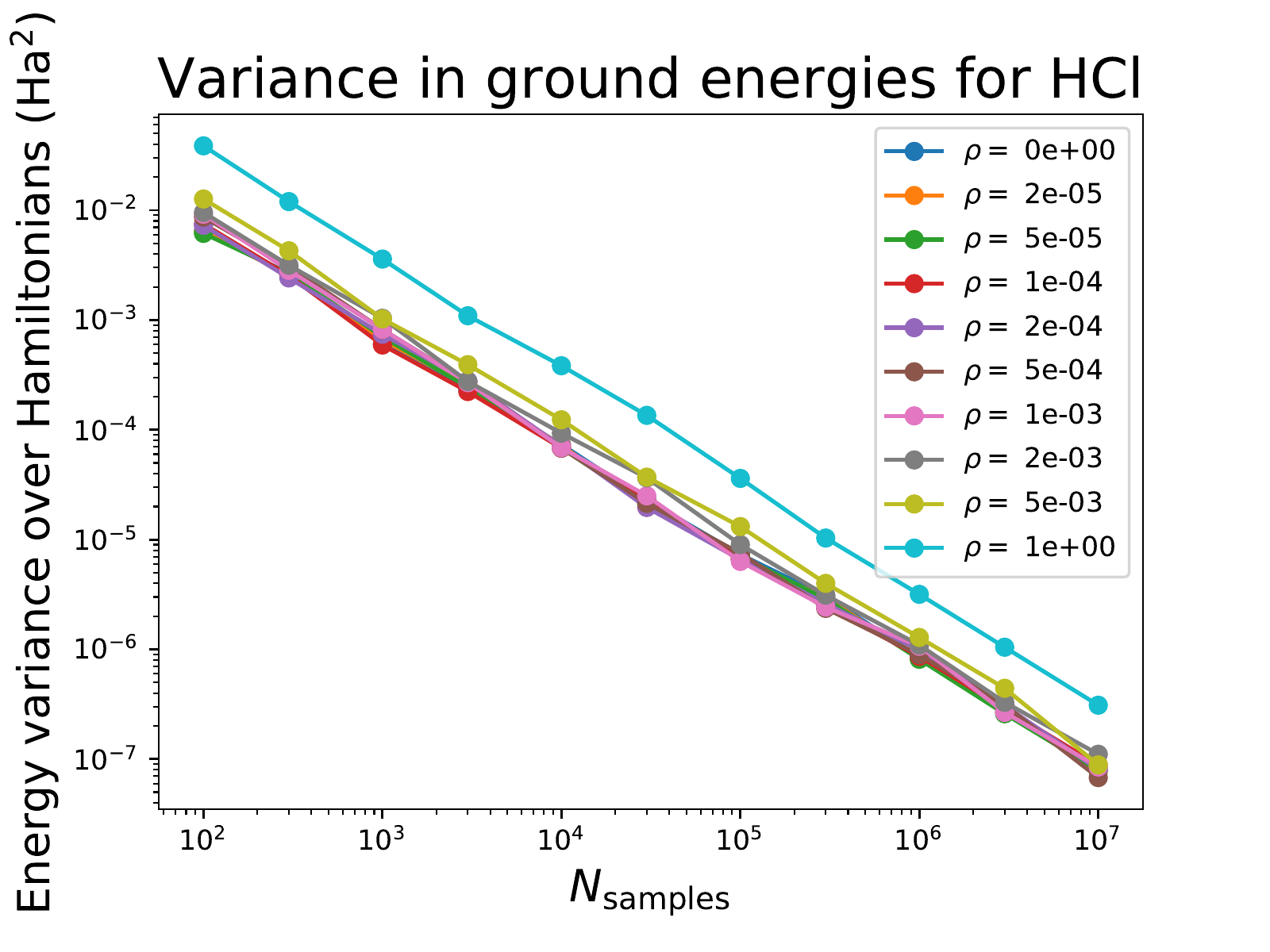}
\label{fig:HCl_var}
}
\\
\subfloat[][]{
\includegraphics[width=0.36\textwidth]{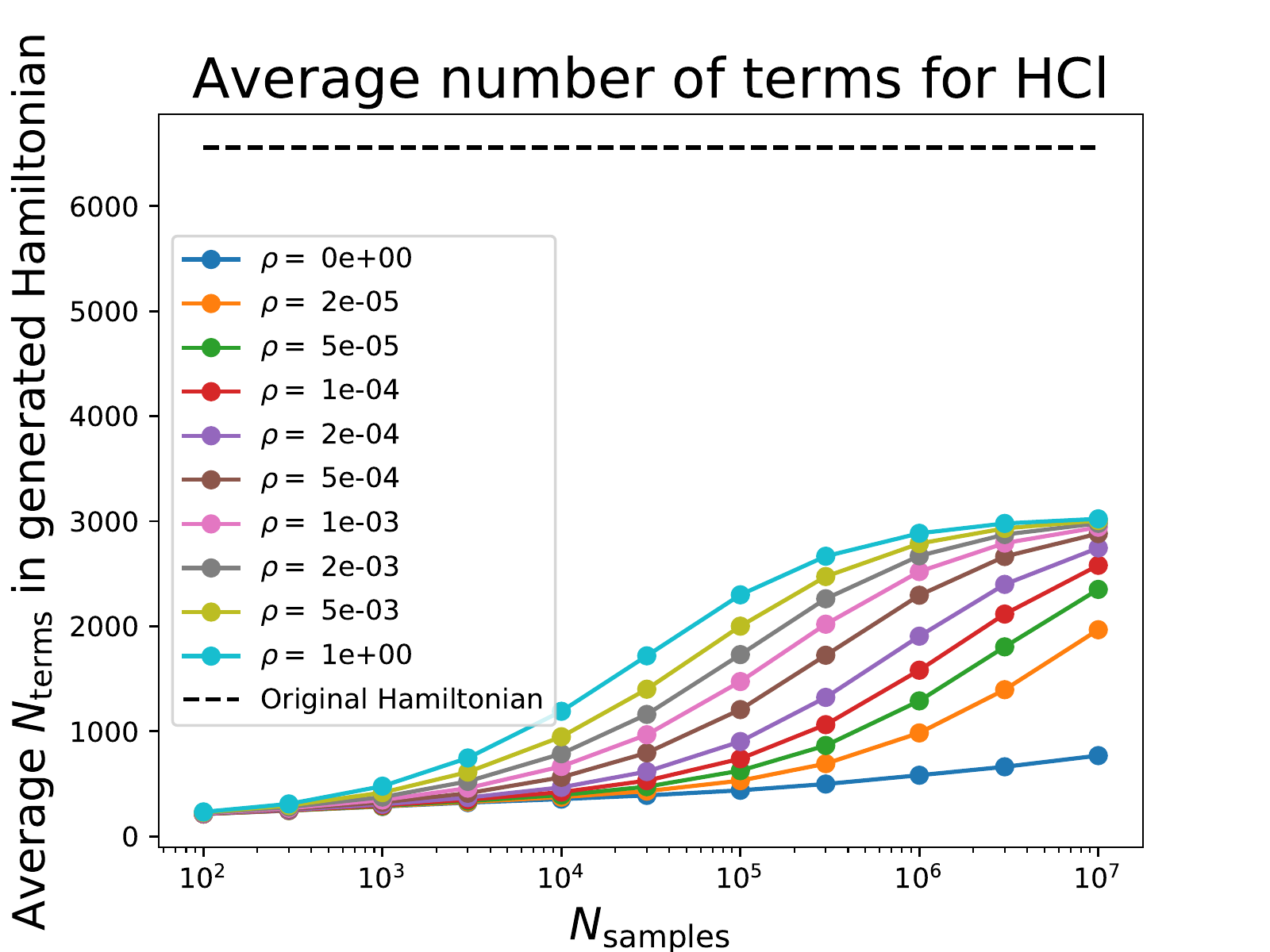}
\label{fig:HCl_avgterms}
}
\caption{\protect\subref{fig:HCl_avgshift} Average ground energy shift (compared to unsampled Hamiltonian), \protect\subref{fig:HCl_var} variance in ground energies over sampled Hamiltonians, and \protect\subref{fig:HCl_avgterms} average number of terms in sampled Hamiltonians for HCl, as a function of number of samples taken to generate the Hamiltonian and the value of the parameter $\rho$. A term in the Hamiltonian $H_\alpha$ is sampled with probability $p_\alpha \propto (1-\rho) \langle H_\alpha \rangle + \rho \| H_\alpha \|$, where the expectation value is taken with the CISD state. 100  sampled Hamiltonians were randomly generated and numerically diagonalized for each data point. Here, no reduction in qubit requirement is possible even for very small numbers of sampled terms; we do not display the corresponding plot. \label{fig:hcl}}
\end{figure}

We see in \fig{dilithium} and \fig{hcl} that the estimates of the ground state energy vary significantly with the degree of hedging used.  We find that if $\rho=1$ then regardless of the number of samples used in constructing the Hamiltonians that we have a very large variance in the ground state energy, as expected since importance sampling has very little impact in that case.  Conversely, we find that if we take $\rho=0$ we maximally privilege the importance of Hamiltonian terms from the CISD state. This leads to very concise models but with shifts in ground state energies that are on the order of $1$ Hartree (Ha) for even $10^{7}$ randomly selected terms (some of which may be duplicates).  If we instead use a modest amount of hedging ($\rho =2\times 10^{-5}$) then we notice that the shift in the ground state energy is minimized assuming that a shift in energy of $10\%$ of chemical accuracy or $0.1$ mHa is acceptable for Hamiltonian truncation error.  For dilithium, this represents a $30\%$ reduction in the number of terms in the Hamiltonian; on the other hand, for hydrogen chloride, this reduces the number of terms in the Hamiltonian by a factor of $3$.  Since the cost of a Trotter-Suzuki simulation of chemistry scales super-linearly with the number of terms in the Hamiltonian this constitutes a substantial reduction in the complexity.

We also note that for the case of dilithium, the number of qubits needed to perform the simulation varied over the different runs.  By contrast hydrogen chloride showed no such behavior.  This difference arises from the fact that dilithium requires six electrons that reside in 20 spin orbitals.  Hydrogen chloride consists of eighteen electrons again in 20 spin orbitals.  As a result, nearly every spin orbital will be relevant in that which explains why the number of spin orbitals needed to express dilithium to a fixed degree of precision changes whereas it does not for HCl.  This illustrates that our randomization procedure can be used to help select an active space for a simulation on the fly as the precision needed in the Hamiltonian increases through a phase estimation procedure.

\section{Conclusion}
This work has shown that iterative phase estimation is more flexible than may previously have been thought, and that the number of terms in the Hamiltonian be randomized at each step of iterative phase estimation without substantially contributing to the underlying variance of an unbiased estimator of the eigenphase.  We further show numerically that by using such strategies for subsampling the Hamiltonian terms, that we can perform a simulation using fewer Hamiltonian terms than traditional approaches require.  
These reductions in the number of terms directly impact the complexity of Trotter-Suzuki-based simulation and indirectly impact qubitization and truncated Taylor series simulation methods \cite{berry2015simulating,low2016hamiltonian} because they also reduce the 1-norm of the vector of Hamiltonian terms.  
The simpler idea of using a sampled Hamiltonian with fewer terms (without using it in phase estimation) could also lead to improvements for variational algorithms \cite{peruzzo2014variational}; to our knowledge this direction has not been explored.

Now that we have shown that iterative phase estimation can be applied using a randomized Hamiltonian, a number of options arise for future work.  While we looked at using CISD states to compute the importance of terms in the Hamiltonian, other methods can also be used such as coupled cluster ansatzes \cite{helgaker2014molecular}.  
Additionally, machine learning algorithms could be used to find more sophisticated importance functionals than the hedged functions that we considered.  
More broadly, these methods could potentially be used with other approaches such as coalescing to further reduce the number of terms present in the Hamiltonian.  A systematic study of this may reveal a generalized method for reducing the cost of simulation that incorporates not just methods to reduce the number of Hamiltonian terms but more generic structural optimizations to the Hamiltonian as well.

\subsection*{Acknowledgments}
We thank Dave Wecker for several illuminating discussions about coalescing, Matthias Degroote for feedback on an earlier version of the manuscript, as well as the contributors to the open-source packages OpenFermion~\cite{openfermion2017} and Psi4~\cite{turney2012psi4} which were used for some of the numerics in this work. I.~D.~K.~acknowledges partial support from the National Sciences and Engineering Research Council of Canada.

\bibliographystyle{apsrev4-1}
\bibliography{science}

%merlin.mbs apsrev4-1.bst 2010-07-25 4.21a (PWD, AO, DPC) hacked
%Control: key (0)
%Control: author (72) initials jnrlst
%Control: editor formatted (1) identically to author
%Control: production of article title (-1) disabled
%Control: page (0) single
%Control: year (1) truncated
%Control: production of eprint (0) enabled
\begin{thebibliography}{25}%
\makeatletter
\providecommand \@ifxundefined [1]{%
 \@ifx{#1\undefined}
}%
\providecommand \@ifnum [1]{%
 \ifnum #1\expandafter \@firstoftwo
 \else \expandafter \@secondoftwo
 \fi
}%
\providecommand \@ifx [1]{%
 \ifx #1\expandafter \@firstoftwo
 \else \expandafter \@secondoftwo
 \fi
}%
\providecommand \natexlab [1]{#1}%
\providecommand \enquote  [1]{``#1''}%
\providecommand \bibnamefont  [1]{#1}%
\providecommand \bibfnamefont [1]{#1}%
\providecommand \citenamefont [1]{#1}%
\providecommand \href@noop [0]{\@secondoftwo}%
\providecommand \href [0]{\begingroup \@sanitize@url \@href}%
\providecommand \@href[1]{\@@startlink{#1}\@@href}%
\providecommand \@@href[1]{\endgroup#1\@@endlink}%
\providecommand \@sanitize@url [0]{\catcode `\\12\catcode `\$12\catcode
  `\&12\catcode `\#12\catcode `\^12\catcode `\_12\catcode `\%12\relax}%
\providecommand \@@startlink[1]{}%
\providecommand \@@endlink[0]{}%
\providecommand \url  [0]{\begingroup\@sanitize@url \@url }%
\providecommand \@url [1]{\endgroup\@href {#1}{\urlprefix }}%
\providecommand \urlprefix  [0]{URL }%
\providecommand \Eprint [0]{\href }%
\providecommand \doibase [0]{http://dx.doi.org/}%
\providecommand \selectlanguage [0]{\@gobble}%
\providecommand \bibinfo  [0]{\@secondoftwo}%
\providecommand \bibfield  [0]{\@secondoftwo}%
\providecommand \translation [1]{[#1]}%
\providecommand \BibitemOpen [0]{}%
\providecommand \bibitemStop [0]{}%
\providecommand \bibitemNoStop [0]{.\EOS\space}%
\providecommand \EOS [0]{\spacefactor3000\relax}%
\providecommand \BibitemShut  [1]{\csname bibitem#1\endcsname}%
\let\auto@bib@innerbib\@empty
%</preamble>
\bibitem [{\citenamefont {Aspuru-Guzik}\ \emph {et~al.}(2005)\citenamefont
  {Aspuru-Guzik}, \citenamefont {Dutoi}, \citenamefont {Love},\ and\
  \citenamefont {Head-Gordon}}]{aspuru2005simulated}%
  \BibitemOpen
  \bibfield  {author} {\bibinfo {author} {\bibfnamefont {A.}~\bibnamefont
  {Aspuru-Guzik}}, \bibinfo {author} {\bibfnamefont {A.~D.}\ \bibnamefont
  {Dutoi}}, \bibinfo {author} {\bibfnamefont {P.~J.}\ \bibnamefont {Love}}, \
  and\ \bibinfo {author} {\bibfnamefont {M.}~\bibnamefont {Head-Gordon}},\
  }\href {\doibase 10.1126/science.1113479} {\bibfield  {journal} {\bibinfo
  {journal} {Science}\ }\textbf {\bibinfo {volume} {309}},\ \bibinfo {pages}
  {1704} (\bibinfo {year} {2005})}\BibitemShut {NoStop}%
\bibitem [{\citenamefont {McArdle}\ \emph {et~al.}(2018)\citenamefont
  {McArdle}, \citenamefont {Endo}, \citenamefont {Aspuru-Guzik}, \citenamefont
  {Benjamin},\ and\ \citenamefont {Yuan}}]{mcardle2018quantum}%
  \BibitemOpen
  \bibfield  {author} {\bibinfo {author} {\bibfnamefont {S.}~\bibnamefont
  {McArdle}}, \bibinfo {author} {\bibfnamefont {S.}~\bibnamefont {Endo}},
  \bibinfo {author} {\bibfnamefont {A.}~\bibnamefont {Aspuru-Guzik}}, \bibinfo
  {author} {\bibfnamefont {S.}~\bibnamefont {Benjamin}}, \ and\ \bibinfo
  {author} {\bibfnamefont {X.}~\bibnamefont {Yuan}},\ }\href
  {http://arxiv.org/abs/1808.10402} {\bibfield  {journal} {\bibinfo  {journal}
  {arXiv preprint arXiv:1808.10402}\ } (\bibinfo {year} {2018})}\BibitemShut
  {NoStop}%
\bibitem [{\citenamefont {Cao}\ \emph {et~al.}(2018)\citenamefont {Cao},
  \citenamefont {Romero}, \citenamefont {Olson}, \citenamefont {Degroote},
  \citenamefont {Johnson}, \citenamefont {Kieferov{\'a}}, \citenamefont
  {Kivlichan}, \citenamefont {Menke}, \citenamefont {Peropadre}, \citenamefont
  {Sawaya} \emph {et~al.}}]{cao2018quantum}%
  \BibitemOpen
  \bibfield  {author} {\bibinfo {author} {\bibfnamefont {Y.}~\bibnamefont
  {Cao}}, \bibinfo {author} {\bibfnamefont {J.}~\bibnamefont {Romero}},
  \bibinfo {author} {\bibfnamefont {J.~P.}\ \bibnamefont {Olson}}, \bibinfo
  {author} {\bibfnamefont {M.}~\bibnamefont {Degroote}}, \bibinfo {author}
  {\bibfnamefont {P.~D.}\ \bibnamefont {Johnson}}, \bibinfo {author}
  {\bibfnamefont {M.}~\bibnamefont {Kieferov{\'a}}}, \bibinfo {author}
  {\bibfnamefont {I.~D.}\ \bibnamefont {Kivlichan}}, \bibinfo {author}
  {\bibfnamefont {T.}~\bibnamefont {Menke}}, \bibinfo {author} {\bibfnamefont
  {B.}~\bibnamefont {Peropadre}}, \bibinfo {author} {\bibfnamefont {N.~P.}\
  \bibnamefont {Sawaya}},  \emph {et~al.},\ }\href
  {http://arxiv.org/abs/1812.09976} {\bibfield  {journal} {\bibinfo  {journal}
  {arXiv preprint arXiv:1812.09976}\ } (\bibinfo {year} {2018})}\BibitemShut
  {NoStop}%
\bibitem [{\citenamefont {Babbush}\ \emph {et~al.}(2018)\citenamefont
  {Babbush}, \citenamefont {Wiebe}, \citenamefont {McClean}, \citenamefont
  {McClain}, \citenamefont {Neven},\ and\ \citenamefont
  {Chan}}]{babbush2018low}%
  \BibitemOpen
  \bibfield  {author} {\bibinfo {author} {\bibfnamefont {R.}~\bibnamefont
  {Babbush}}, \bibinfo {author} {\bibfnamefont {N.}~\bibnamefont {Wiebe}},
  \bibinfo {author} {\bibfnamefont {J.}~\bibnamefont {McClean}}, \bibinfo
  {author} {\bibfnamefont {J.}~\bibnamefont {McClain}}, \bibinfo {author}
  {\bibfnamefont {H.}~\bibnamefont {Neven}}, \ and\ \bibinfo {author}
  {\bibfnamefont {G.~K.-L.}\ \bibnamefont {Chan}},\ }\href {\doibase
  10.1103/PhysRevX.8.011044} {\bibfield  {journal} {\bibinfo  {journal} {Phys.
  Rev. X}\ }\textbf {\bibinfo {volume} {8}},\ \bibinfo {pages} {011044}
  (\bibinfo {year} {2018})}\BibitemShut {NoStop}%
\bibitem [{\citenamefont {Jordan}\ \emph {et~al.}(2012)\citenamefont {Jordan},
  \citenamefont {Lee},\ and\ \citenamefont {Preskill}}]{jordan2012quantum}%
  \BibitemOpen
  \bibfield  {author} {\bibinfo {author} {\bibfnamefont {S.~P.}\ \bibnamefont
  {Jordan}}, \bibinfo {author} {\bibfnamefont {K.~S.~M.}\ \bibnamefont {Lee}},
  \ and\ \bibinfo {author} {\bibfnamefont {J.}~\bibnamefont {Preskill}},\
  }\href {\doibase 10.1126/science.1217069} {\bibfield  {journal} {\bibinfo
  {journal} {Science}\ }\textbf {\bibinfo {volume} {336}},\ \bibinfo {pages}
  {1130} (\bibinfo {year} {2012})}\BibitemShut {NoStop}%
\bibitem [{\citenamefont {Helgaker}\ \emph {et~al.}(2014)\citenamefont
  {Helgaker}, \citenamefont {Jorgensen},\ and\ \citenamefont
  {Olsen}}]{helgaker2014molecular}%
  \BibitemOpen
  \bibfield  {author} {\bibinfo {author} {\bibfnamefont {T.}~\bibnamefont
  {Helgaker}}, \bibinfo {author} {\bibfnamefont {P.}~\bibnamefont {Jorgensen}},
  \ and\ \bibinfo {author} {\bibfnamefont {J.}~\bibnamefont {Olsen}},\
  }\href@noop {} {\emph {\bibinfo {title} {Molecular Electronic-Structure
  Theory}}}\ (\bibinfo  {publisher} {John Wiley \& Sons},\ \bibinfo {year}
  {2014})\BibitemShut {NoStop}%
\bibitem [{\citenamefont {Pople}(1999)}]{pople1999nobel}%
  \BibitemOpen
  \bibfield  {author} {\bibinfo {author} {\bibfnamefont {J.~A.}\ \bibnamefont
  {Pople}},\ }\href {\doibase 10.1103/RevModPhys.71.1267} {\bibfield  {journal}
  {\bibinfo  {journal} {Rev. Mod. Phys.}\ }\textbf {\bibinfo {volume} {71}},\
  \bibinfo {pages} {1267} (\bibinfo {year} {1999})}\BibitemShut {NoStop}%
\bibitem [{\citenamefont {Kivlichan}\ \emph {et~al.}(2019)\citenamefont
  {Kivlichan}, \citenamefont {Gidney}, \citenamefont {Berry}, \citenamefont
  {Wiebe}, \citenamefont {McClean}, \citenamefont {Sun}, \citenamefont {Jiang},
  \citenamefont {Rubin}, \citenamefont {Fowler}, \citenamefont {Aspuru-Guzik}
  \emph {et~al.}}]{kivlichan2019improved}%
  \BibitemOpen
  \bibfield  {author} {\bibinfo {author} {\bibfnamefont {I.~D.}\ \bibnamefont
  {Kivlichan}}, \bibinfo {author} {\bibfnamefont {C.}~\bibnamefont {Gidney}},
  \bibinfo {author} {\bibfnamefont {D.~W.}\ \bibnamefont {Berry}}, \bibinfo
  {author} {\bibfnamefont {N.}~\bibnamefont {Wiebe}}, \bibinfo {author}
  {\bibfnamefont {J.}~\bibnamefont {McClean}}, \bibinfo {author} {\bibfnamefont
  {W.}~\bibnamefont {Sun}}, \bibinfo {author} {\bibfnamefont {Z.}~\bibnamefont
  {Jiang}}, \bibinfo {author} {\bibfnamefont {N.}~\bibnamefont {Rubin}},
  \bibinfo {author} {\bibfnamefont {A.}~\bibnamefont {Fowler}}, \bibinfo
  {author} {\bibfnamefont {A.}~\bibnamefont {Aspuru-Guzik}},  \emph {et~al.},\
  }\href {https://arxiv.org/abs/1902.10673} {\bibfield  {journal} {\bibinfo
  {journal} {arXiv preprint arXiv:1902.10673}\ } (\bibinfo {year}
  {2019})}\BibitemShut {NoStop}%
\bibitem [{\citenamefont {Childs}\ \emph {et~al.}(2018)\citenamefont {Childs},
  \citenamefont {Ostrander},\ and\ \citenamefont {Su}}]{childs2018faster}%
  \BibitemOpen
  \bibfield  {author} {\bibinfo {author} {\bibfnamefont {A.~M.}\ \bibnamefont
  {Childs}}, \bibinfo {author} {\bibfnamefont {A.}~\bibnamefont {Ostrander}}, \
  and\ \bibinfo {author} {\bibfnamefont {Y.}~\bibnamefont {Su}},\ }\href
  {http://arxiv.org/abs/1805.08385} {\bibfield  {journal} {\bibinfo  {journal}
  {arXiv preprint arXiv:1805.08385}\ } (\bibinfo {year} {2018})}\BibitemShut
  {NoStop}%
\bibitem [{\citenamefont {Campbell}(2018)}]{campbell2018random}%
  \BibitemOpen
  \bibfield  {author} {\bibinfo {author} {\bibfnamefont {E.}~\bibnamefont
  {Campbell}},\ }\href {http://arxiv.org/abs/1811.08017} {\bibfield  {journal}
  {\bibinfo  {journal} {arXiv preprint arXiv:1811.08017}\ } (\bibinfo {year}
  {2018})}\BibitemShut {NoStop}%
\bibitem [{\citenamefont {Giovannetti}\ \emph {et~al.}(2004)\citenamefont
  {Giovannetti}, \citenamefont {Lloyd},\ and\ \citenamefont
  {Maccone}}]{giovannetti2004quantum}%
  \BibitemOpen
  \bibfield  {author} {\bibinfo {author} {\bibfnamefont {V.}~\bibnamefont
  {Giovannetti}}, \bibinfo {author} {\bibfnamefont {S.}~\bibnamefont {Lloyd}},
  \ and\ \bibinfo {author} {\bibfnamefont {L.}~\bibnamefont {Maccone}},\ }\href
  {\doibase 10.1126/science.1104149} {\bibfield  {journal} {\bibinfo  {journal}
  {Science}\ }\textbf {\bibinfo {volume} {306}},\ \bibinfo {pages} {1330}
  (\bibinfo {year} {2004})}\BibitemShut {NoStop}%
\bibitem [{\citenamefont {Higgins}\ \emph {et~al.}(2007)\citenamefont
  {Higgins}, \citenamefont {Berry}, \citenamefont {Bartlett}, \citenamefont
  {Wiseman},\ and\ \citenamefont {Pryde}}]{higgins2007entanglement}%
  \BibitemOpen
  \bibfield  {author} {\bibinfo {author} {\bibfnamefont {B.~L.}\ \bibnamefont
  {Higgins}}, \bibinfo {author} {\bibfnamefont {D.~W.}\ \bibnamefont {Berry}},
  \bibinfo {author} {\bibfnamefont {S.~D.}\ \bibnamefont {Bartlett}}, \bibinfo
  {author} {\bibfnamefont {H.~M.}\ \bibnamefont {Wiseman}}, \ and\ \bibinfo
  {author} {\bibfnamefont {G.~J.}\ \bibnamefont {Pryde}},\ }\href
  {http://dx.doi.org/10.1038/nature06257} {\bibfield  {journal} {\bibinfo
  {journal} {Nature}\ }\textbf {\bibinfo {volume} {450}},\ \bibinfo {pages}
  {393} (\bibinfo {year} {2007})}\BibitemShut {NoStop}%
\bibitem [{\citenamefont {Berry}\ \emph {et~al.}(2009)\citenamefont {Berry},
  \citenamefont {Higgins}, \citenamefont {Bartlett}, \citenamefont {Mitchell},
  \citenamefont {Pryde},\ and\ \citenamefont {Wiseman}}]{berry2009how}%
  \BibitemOpen
  \bibfield  {author} {\bibinfo {author} {\bibfnamefont {D.~W.}\ \bibnamefont
  {Berry}}, \bibinfo {author} {\bibfnamefont {B.~L.}\ \bibnamefont {Higgins}},
  \bibinfo {author} {\bibfnamefont {S.~D.}\ \bibnamefont {Bartlett}}, \bibinfo
  {author} {\bibfnamefont {M.~W.}\ \bibnamefont {Mitchell}}, \bibinfo {author}
  {\bibfnamefont {G.~J.}\ \bibnamefont {Pryde}}, \ and\ \bibinfo {author}
  {\bibfnamefont {H.~M.}\ \bibnamefont {Wiseman}},\ }\href
  {https://link.aps.org/doi/10.1103/PhysRevA.80.052114} {\bibfield  {journal}
  {\bibinfo  {journal} {Phys. Rev. A}\ }\textbf {\bibinfo {volume} {80}},\
  \bibinfo {pages} {052114} (\bibinfo {year} {2009})}\BibitemShut {NoStop}%
\bibitem [{\citenamefont {Kitaev}(1995)}]{kitaev1995quantum}%
  \BibitemOpen
  \bibfield  {author} {\bibinfo {author} {\bibfnamefont {A.~Y.}\ \bibnamefont
  {Kitaev}},\ }\href {http://arxiv.org/abs/quant-ph/9511026} {\bibfield
  {journal} {\bibinfo  {journal} {arXiv preprint arXiv:9511026}\ } (\bibinfo
  {year} {1995})}\BibitemShut {NoStop}%
\bibitem [{\citenamefont {Kimmel}\ \emph {et~al.}(2015)\citenamefont {Kimmel},
  \citenamefont {Low},\ and\ \citenamefont {Yoder}}]{kimmel2015robust}%
  \BibitemOpen
  \bibfield  {author} {\bibinfo {author} {\bibfnamefont {S.}~\bibnamefont
  {Kimmel}}, \bibinfo {author} {\bibfnamefont {G.~H.}\ \bibnamefont {Low}}, \
  and\ \bibinfo {author} {\bibfnamefont {T.~J.}\ \bibnamefont {Yoder}},\ }\href
  {\doibase 10.1103/PhysRevA.92.062315} {\bibfield  {journal} {\bibinfo
  {journal} {Phys. Rev. A}\ }\textbf {\bibinfo {volume} {92}},\ \bibinfo
  {pages} {062315} (\bibinfo {year} {2015})}\BibitemShut {NoStop}%
\bibitem [{\citenamefont {Svore}\ \emph {et~al.}(2014)\citenamefont {Svore},
  \citenamefont {Hastings},\ and\ \citenamefont {Freedman}}]{svore2014faster}%
  \BibitemOpen
  \bibfield  {author} {\bibinfo {author} {\bibfnamefont {K.~M.}\ \bibnamefont
  {Svore}}, \bibinfo {author} {\bibfnamefont {M.~B.}\ \bibnamefont {Hastings}},
  \ and\ \bibinfo {author} {\bibfnamefont {M.}~\bibnamefont {Freedman}},\
  }\href {http://dl.acm.org/citation.cfm?id=2600508.2600515} {\bibfield
  {journal} {\bibinfo  {journal} {Quantum Information \& Computation}\ }\textbf
  {\bibinfo {volume} {14}},\ \bibinfo {pages} {306} (\bibinfo {year}
  {2014})}\BibitemShut {NoStop}%
\bibitem [{\citenamefont {Wiebe}\ and\ \citenamefont
  {Granade}(2016)}]{wiebe2015efficient}%
  \BibitemOpen
  \bibfield  {author} {\bibinfo {author} {\bibfnamefont {N.}~\bibnamefont
  {Wiebe}}\ and\ \bibinfo {author} {\bibfnamefont {C.}~\bibnamefont
  {Granade}},\ }\href {https://link.aps.org/doi/10.1103/PhysRevLett.117.010503}
  {\bibfield  {journal} {\bibinfo  {journal} {Phys. Rev. Lett.}\ }\textbf
  {\bibinfo {volume} {117}},\ \bibinfo {pages} {010503} (\bibinfo {year}
  {2016})}\BibitemShut {NoStop}%
\bibitem [{\citenamefont {Paesani}\ \emph {et~al.}(2017)\citenamefont
  {Paesani}, \citenamefont {Gentile}, \citenamefont {Santagati}, \citenamefont
  {Wang}, \citenamefont {Wiebe}, \citenamefont {Tew}, \citenamefont {O'Brien},\
  and\ \citenamefont {Thompson}}]{paesani2017experimental}%
  \BibitemOpen
  \bibfield  {author} {\bibinfo {author} {\bibfnamefont {S.}~\bibnamefont
  {Paesani}}, \bibinfo {author} {\bibfnamefont {A.~A.}\ \bibnamefont
  {Gentile}}, \bibinfo {author} {\bibfnamefont {R.}~\bibnamefont {Santagati}},
  \bibinfo {author} {\bibfnamefont {J.}~\bibnamefont {Wang}}, \bibinfo {author}
  {\bibfnamefont {N.}~\bibnamefont {Wiebe}}, \bibinfo {author} {\bibfnamefont
  {D.~P.}\ \bibnamefont {Tew}}, \bibinfo {author} {\bibfnamefont {J.~L.}\
  \bibnamefont {O'Brien}}, \ and\ \bibinfo {author} {\bibfnamefont {M.~G.}\
  \bibnamefont {Thompson}},\ }\href {\doibase 10.1103/PhysRevLett.118.100503}
  {\bibfield  {journal} {\bibinfo  {journal} {Phys. Rev. Lett.}\ }\textbf
  {\bibinfo {volume} {118}},\ \bibinfo {pages} {100503} (\bibinfo {year}
  {2017})}\BibitemShut {NoStop}%
\bibitem [{\citenamefont {Wecker}\ \emph {et~al.}(2014)\citenamefont {Wecker},
  \citenamefont {Bauer}, \citenamefont {Clark}, \citenamefont {Hastings},\ and\
  \citenamefont {Troyer}}]{wecker2014gate}%
  \BibitemOpen
  \bibfield  {author} {\bibinfo {author} {\bibfnamefont {D.}~\bibnamefont
  {Wecker}}, \bibinfo {author} {\bibfnamefont {B.}~\bibnamefont {Bauer}},
  \bibinfo {author} {\bibfnamefont {B.~K.}\ \bibnamefont {Clark}}, \bibinfo
  {author} {\bibfnamefont {M.~B.}\ \bibnamefont {Hastings}}, \ and\ \bibinfo
  {author} {\bibfnamefont {M.}~\bibnamefont {Troyer}},\ }\href
  {http://link.aps.org/doi/10.1103/PhysRevA.90.022305} {\bibfield  {journal}
  {\bibinfo  {journal} {Phys. Rev. A}\ }\textbf {\bibinfo {volume} {90}},\
  \bibinfo {pages} {022305} (\bibinfo {year} {2014})}\BibitemShut {NoStop}%
\bibitem [{\citenamefont {Poulin}\ \emph {et~al.}(2015)\citenamefont {Poulin},
  \citenamefont {Hastings}, \citenamefont {Wecker}, \citenamefont {Wiebe},
  \citenamefont {Doherty},\ and\ \citenamefont {Troyer}}]{poulin2015trotter}%
  \BibitemOpen
  \bibfield  {author} {\bibinfo {author} {\bibfnamefont {D.}~\bibnamefont
  {Poulin}}, \bibinfo {author} {\bibfnamefont {M.~B.}\ \bibnamefont
  {Hastings}}, \bibinfo {author} {\bibfnamefont {D.}~\bibnamefont {Wecker}},
  \bibinfo {author} {\bibfnamefont {N.}~\bibnamefont {Wiebe}}, \bibinfo
  {author} {\bibfnamefont {A.~C.}\ \bibnamefont {Doherty}}, \ and\ \bibinfo
  {author} {\bibfnamefont {M.}~\bibnamefont {Troyer}},\ }\href
  {http://dl.acm.org/citation.cfm?id=2871401.2871402} {\bibfield  {journal}
  {\bibinfo  {journal} {Quantum Info. Comput.}\ }\textbf {\bibinfo {volume}
  {15}},\ \bibinfo {pages} {361} (\bibinfo {year} {2015})}\BibitemShut
  {NoStop}%
\bibitem [{\citenamefont {Berry}\ \emph {et~al.}(2015)\citenamefont {Berry},
  \citenamefont {Childs}, \citenamefont {Cleve}, \citenamefont {Kothari},\ and\
  \citenamefont {Somma}}]{berry2015simulating}%
  \BibitemOpen
  \bibfield  {author} {\bibinfo {author} {\bibfnamefont {D.~W.}\ \bibnamefont
  {Berry}}, \bibinfo {author} {\bibfnamefont {A.~M.}\ \bibnamefont {Childs}},
  \bibinfo {author} {\bibfnamefont {R.}~\bibnamefont {Cleve}}, \bibinfo
  {author} {\bibfnamefont {R.}~\bibnamefont {Kothari}}, \ and\ \bibinfo
  {author} {\bibfnamefont {R.~D.}\ \bibnamefont {Somma}},\ }\href
  {http://link.aps.org/doi/10.1103/PhysRevLett.114.090502} {\bibfield
  {journal} {\bibinfo  {journal} {Phys. Rev. Lett.}\ }\textbf {\bibinfo
  {volume} {114}},\ \bibinfo {pages} {090502} (\bibinfo {year}
  {2015})}\BibitemShut {NoStop}%
\bibitem [{\citenamefont {Low}\ and\ \citenamefont
  {Chuang}(2016)}]{low2016hamiltonian}%
  \BibitemOpen
  \bibfield  {author} {\bibinfo {author} {\bibfnamefont {G.~H.}\ \bibnamefont
  {Low}}\ and\ \bibinfo {author} {\bibfnamefont {I.~L.}\ \bibnamefont
  {Chuang}},\ }\href {http://arxiv.org/abs/1610.06546} {\bibfield  {journal}
  {\bibinfo  {journal} {arXiv preprint arXiv:1610.06546}\ } (\bibinfo {year}
  {2016})}\BibitemShut {NoStop}%
\bibitem [{\citenamefont {Peruzzo}\ \emph {et~al.}(2014)\citenamefont
  {Peruzzo}, \citenamefont {McClean}, \citenamefont {Shadbolt}, \citenamefont
  {Yung}, \citenamefont {Zhou}, \citenamefont {Love}, \citenamefont
  {Aspuru-Guzik},\ and\ \citenamefont {O'Brien}}]{peruzzo2014variational}%
  \BibitemOpen
  \bibfield  {author} {\bibinfo {author} {\bibfnamefont {A.}~\bibnamefont
  {Peruzzo}}, \bibinfo {author} {\bibfnamefont {J.}~\bibnamefont {McClean}},
  \bibinfo {author} {\bibfnamefont {P.}~\bibnamefont {Shadbolt}}, \bibinfo
  {author} {\bibfnamefont {M.-H.}\ \bibnamefont {Yung}}, \bibinfo {author}
  {\bibfnamefont {X.-Q.}\ \bibnamefont {Zhou}}, \bibinfo {author}
  {\bibfnamefont {P.~J.}\ \bibnamefont {Love}}, \bibinfo {author}
  {\bibfnamefont {A.}~\bibnamefont {Aspuru-Guzik}}, \ and\ \bibinfo {author}
  {\bibfnamefont {J.~L.}\ \bibnamefont {O'Brien}},\ }\href {\doibase
  10.1038/ncomms5213} {\bibfield  {journal} {\bibinfo  {journal} {Nature
  Communications}\ }\textbf {\bibinfo {volume} {5}},\ \bibinfo {pages} {1}
  (\bibinfo {year} {2014})}\BibitemShut {NoStop}%
\bibitem [{\citenamefont {McClean}\ \emph {et~al.}(2017)\citenamefont
  {McClean}, \citenamefont {Kivlichan}, \citenamefont {Sung}, \citenamefont
  {Steiger}, \citenamefont {Cao}, \citenamefont {Dai}, \citenamefont {Fried},
  \citenamefont {Gidney}, \citenamefont {Gimby}, \citenamefont {Gokhale},
  \citenamefont {H{\"{a}}ner}, \citenamefont {Hardikar}, \citenamefont
  {Havl{\'{i}}{\v{c}}ek}, \citenamefont {Huang}, \citenamefont {Izaac},
  \citenamefont {Jiang}, \citenamefont {Liu}, \citenamefont {Neeley},
  \citenamefont {O'Brien}, \citenamefont {Ozfidan}, \citenamefont {Radin},
  \citenamefont {Romero}, \citenamefont {Rubin}, \citenamefont {Sawaya},
  \citenamefont {Setia}, \citenamefont {Sim}, \citenamefont {Sung},
  \citenamefont {Steudtner}, \citenamefont {Sun}, \citenamefont {Sun},
  \citenamefont {Zhang},\ and\ \citenamefont {Babbush}}]{openfermion2017}%
  \BibitemOpen
  \bibfield  {author} {\bibinfo {author} {\bibfnamefont {J.~R.}\ \bibnamefont
  {McClean}}, \bibinfo {author} {\bibfnamefont {I.~D.}\ \bibnamefont
  {Kivlichan}}, \bibinfo {author} {\bibfnamefont {K.~J.}\ \bibnamefont {Sung}},
  \bibinfo {author} {\bibfnamefont {D.~S.}\ \bibnamefont {Steiger}}, \bibinfo
  {author} {\bibfnamefont {Y.}~\bibnamefont {Cao}}, \bibinfo {author}
  {\bibfnamefont {C.}~\bibnamefont {Dai}}, \bibinfo {author} {\bibfnamefont
  {E.~S.}\ \bibnamefont {Fried}}, \bibinfo {author} {\bibfnamefont
  {C.}~\bibnamefont {Gidney}}, \bibinfo {author} {\bibfnamefont
  {B.}~\bibnamefont {Gimby}}, \bibinfo {author} {\bibfnamefont
  {P.}~\bibnamefont {Gokhale}}, \bibinfo {author} {\bibfnamefont
  {T.}~\bibnamefont {H{\"{a}}ner}}, \bibinfo {author} {\bibfnamefont
  {T.}~\bibnamefont {Hardikar}}, \bibinfo {author} {\bibfnamefont
  {V.}~\bibnamefont {Havl{\'{i}}{\v{c}}ek}}, \bibinfo {author} {\bibfnamefont
  {C.}~\bibnamefont {Huang}}, \bibinfo {author} {\bibfnamefont
  {J.}~\bibnamefont {Izaac}}, \bibinfo {author} {\bibfnamefont
  {Z.}~\bibnamefont {Jiang}}, \bibinfo {author} {\bibfnamefont
  {X.}~\bibnamefont {Liu}}, \bibinfo {author} {\bibfnamefont {M.}~\bibnamefont
  {Neeley}}, \bibinfo {author} {\bibfnamefont {T.}~\bibnamefont {O'Brien}},
  \bibinfo {author} {\bibfnamefont {I.}~\bibnamefont {Ozfidan}}, \bibinfo
  {author} {\bibfnamefont {M.~D.}\ \bibnamefont {Radin}}, \bibinfo {author}
  {\bibfnamefont {J.}~\bibnamefont {Romero}}, \bibinfo {author} {\bibfnamefont
  {N.}~\bibnamefont {Rubin}}, \bibinfo {author} {\bibfnamefont {N.~P.~D.}\
  \bibnamefont {Sawaya}}, \bibinfo {author} {\bibfnamefont {K.}~\bibnamefont
  {Setia}}, \bibinfo {author} {\bibfnamefont {S.}~\bibnamefont {Sim}}, \bibinfo
  {author} {\bibfnamefont {K.}~\bibnamefont {Sung}}, \bibinfo {author}
  {\bibfnamefont {M.}~\bibnamefont {Steudtner}}, \bibinfo {author}
  {\bibfnamefont {Q.}~\bibnamefont {Sun}}, \bibinfo {author} {\bibfnamefont
  {W.}~\bibnamefont {Sun}}, \bibinfo {author} {\bibfnamefont {F.}~\bibnamefont
  {Zhang}}, \ and\ \bibinfo {author} {\bibfnamefont {R.}~\bibnamefont
  {Babbush}},\ }\href {http://arxiv.org/abs/1710.07629} {\bibfield  {journal}
  {\bibinfo  {journal} {arXiv preprint arXiv:1710.07629}\ } (\bibinfo {year}
  {2017})}\BibitemShut {NoStop}%
\bibitem [{\citenamefont {Turney}\ \emph {et~al.}(2012)\citenamefont {Turney},
  \citenamefont {Simmonett}, \citenamefont {Parrish}, \citenamefont
  {Hohenstein}, \citenamefont {Evangelista}, \citenamefont {Fermann},
  \citenamefont {Mintz}, \citenamefont {Burns}, \citenamefont {Wilke},
  \citenamefont {Abrams}, \citenamefont {Russ}, \citenamefont {Leininger},
  \citenamefont {Janssen}, \citenamefont {Seidl}, \citenamefont {Allen},
  \citenamefont {Schaefer}, \citenamefont {King}, \citenamefont {Valeev},
  \citenamefont {Sherrill},\ and\ \citenamefont {Crawford}}]{turney2012psi4}%
  \BibitemOpen
  \bibfield  {author} {\bibinfo {author} {\bibfnamefont {J.~M.}\ \bibnamefont
  {Turney}}, \bibinfo {author} {\bibfnamefont {A.~C.}\ \bibnamefont
  {Simmonett}}, \bibinfo {author} {\bibfnamefont {R.~M.}\ \bibnamefont
  {Parrish}}, \bibinfo {author} {\bibfnamefont {E.~G.}\ \bibnamefont
  {Hohenstein}}, \bibinfo {author} {\bibfnamefont {F.~A.}\ \bibnamefont
  {Evangelista}}, \bibinfo {author} {\bibfnamefont {J.~T.}\ \bibnamefont
  {Fermann}}, \bibinfo {author} {\bibfnamefont {B.~J.}\ \bibnamefont {Mintz}},
  \bibinfo {author} {\bibfnamefont {L.~A.}\ \bibnamefont {Burns}}, \bibinfo
  {author} {\bibfnamefont {J.~J.}\ \bibnamefont {Wilke}}, \bibinfo {author}
  {\bibfnamefont {M.~L.}\ \bibnamefont {Abrams}}, \bibinfo {author}
  {\bibfnamefont {N.~J.}\ \bibnamefont {Russ}}, \bibinfo {author}
  {\bibfnamefont {M.~L.}\ \bibnamefont {Leininger}}, \bibinfo {author}
  {\bibfnamefont {C.~L.}\ \bibnamefont {Janssen}}, \bibinfo {author}
  {\bibfnamefont {E.~T.}\ \bibnamefont {Seidl}}, \bibinfo {author}
  {\bibfnamefont {W.~D.}\ \bibnamefont {Allen}}, \bibinfo {author}
  {\bibfnamefont {H.~F.}\ \bibnamefont {Schaefer}}, \bibinfo {author}
  {\bibfnamefont {R.~A.}\ \bibnamefont {King}}, \bibinfo {author}
  {\bibfnamefont {E.~F.}\ \bibnamefont {Valeev}}, \bibinfo {author}
  {\bibfnamefont {C.~D.}\ \bibnamefont {Sherrill}}, \ and\ \bibinfo {author}
  {\bibfnamefont {T.~D.}\ \bibnamefont {Crawford}},\ }\href
  {https://onlinelibrary.wiley.com/doi/abs/10.1002/wcms.93} {\bibfield
  {journal} {\bibinfo  {journal} {Wiley Interdisciplinary Reviews:
  Computational Molecular Science}\ }\textbf {\bibinfo {volume} {2}},\ \bibinfo
  {pages} {556} (\bibinfo {year} {2012})}\BibitemShut {NoStop}%
\end{thebibliography}%

\appendix
\section{Errors in likelihood function}

\subsection{Subsampling Hamiltonians}
We first consider the case where terms are sampled uniformly from the Hamiltonian. Let the Hamiltonian be a sum of $L$ simulable Hamiltonians $H_\ell$, $H = \sum_{\ell=1}^L H_\ell$. Throughout we consider an eigenstate $\ket\psi$ of $H$ and its corresponding eigenenergy $E$. From the original, we can construct a new Hamiltonian
\be
H_\text{est} = \frac{L}{m} \sum_{i=1}^m H_{\ell_i}
\ee
by uniformly sampling terms $H_{\ell_i}$ from the original Hamiltonian.

When one randomly subsamples the Hamiltonian, errors are naturally introduced.  The main question is less about how large these errors are, but instead about how they impact the iterative phase estimation protocol.  The following lemma states that the impact on the likelihood functions can be made arbitrarily small.
\begin{lemma}\label{lem:subsample}
Let $\ell_i$ be an indexed family of sequences mapping $\{1,\ldots, m\}\rightarrow \{1,\ldots, L\}$ formed by uniformly sampling elements from $\{1,\ldots, L\}$ independently with replacement, and let $\{H_{\ell}: \ell=1,\ldots, L\}$ be a corresponding family of Hamiltonians with $H=\sum_{\ell=1}^L H_{\ell}$.  For $\ket{\psi}$ an eigenstate of $H$ such that $H\ket{\psi}=E\ket{\psi}$ and $H_\mathrm{samp} = \frac{L}{m} \sum_{k=1}^m H_{\ell_i(k)}$ with corresponding eigenstate $H_\mathrm{samp} \ket{\psi_i} = E_i \ket{\psi_i}$ we then have that the error in the likelihood function for phase estimation vanishes with high probability over $H_\mathrm{samp}$  in the limit of large $m$:
\begin{align*}
\left|P(o | Et; M, \theta) - P(o|E_i t;M,\theta) \right| \in O\left(\frac{MtL}{\sqrt{m}}\sqrt{\mathbb{V}_{\ell}(\bra{\psi}H_{\ell} \ket{\psi})} \right)
\end{align*}
\end{lemma}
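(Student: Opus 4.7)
The plan is to reduce the bound on the likelihood difference to a bound on $|E-E_i|$, and then to bound the eigenvalue deviation by concentrating the Rayleigh quotient $\bra\psi H_{\mathrm{samp}} \ket\psi$ around $E$. First I would note that the likelihood \eq{likelihood} depends on the eigenvalue only through $\cos(M(\phi-\theta))$, whose derivative with respect to $\phi$ has magnitude at most $M$. A direct Lipschitz argument then gives $|P(o|Et;M,\theta)-P(o|E_it;M,\theta)|\le \tfrac{Mt}{2}|E-E_i|$, which shifts the problem to controlling $|E-E_i|$.

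Next I would apply first-order (Rayleigh--Schr\"odinger) perturbation theory around the exact Hamiltonian $H$, viewing $H_{\mathrm{samp}}-H$ as the perturbation and identifying $\ket{\psi_i}$ as the perturbed eigenstate continuously connected to $\ket\psi$. The first-order correction gives
\begin{equation*}
E_i = E + \bra\psi (H_{\mathrm{samp}}-H)\ket\psi + O\!\left(\|H_{\mathrm{samp}}-H\|^2\right),
\end{equation*}
so up to a subleading term it suffices to bound $|\bra\psi H_{\mathrm{samp}}\ket\psi - E|$. (The second-order term will be negligible in the regime where the first-order term is already $O(L/\sqrt m)$, since the operator norm of the perturbation concentrates on the same scale; if needed I would invoke a gap $\gamma$ to control the quadratic term exactly as in \thm{main} rather than hide it in the $O$-notation.)

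Now I would exploit the fact that $\bra\psi H_{\mathrm{samp}}\ket\psi = \frac{L}{m}\sum_{k=1}^{m}\bra\psi H_{\ell_i(k)}\ket\psi$ is a sum of $m$ i.i.d.\ random variables, each obtained by uniformly sampling one of the $L$ matrix elements $\bra\psi H_\ell\ket\psi$. The mean of each summand is $\frac{1}{L}\sum_{\ell}\bra\psi H_\ell\ket\psi = E/L$, so $\mathbb{E}[\bra\psi H_{\mathrm{samp}}\ket\psi]=E$, and its variance is $\frac{L^2}{m}\,\mathbb{V}_\ell(\bra\psi H_\ell\ket\psi)$. Chebyshev's inequality (or equivalently a standard sample-mean concentration bound) then yields
\begin{equation*}
\bigl|\bra\psi H_{\mathrm{samp}}\ket\psi - E\bigr| \in O\!\left(\frac{L}{\sqrt m}\sqrt{\mathbb{V}_\ell(\bra\psi H_\ell \ket\psi)}\right)
\end{equation*}
with arbitrarily high probability, and chaining this with the perturbation bound and the Lipschitz estimate gives the stated $O(MtL/\sqrt m)\sqrt{\mathbb{V}_\ell(\bra\psi H_\ell\ket\psi)}$ scaling.

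The main obstacle I anticipate is the perturbation-theoretic step: the lemma statement does not supply a spectral gap, yet to pass from a bound on the Rayleigh quotient to a bound on the eigenvalue of the sampled Hamiltonian one typically needs $\|H_{\mathrm{samp}}-H\|$ small relative to the gap. I expect to handle this by observing that (i) first-order perturbation theory does not need a gap to give the leading correction $\bra\psi(H_{\mathrm{samp}}-H)\ket\psi$, and (ii) the residual quadratic term is dominated by the stated leading order for $m$ large, so it gets absorbed into the $O(\cdot)$ together with a ``with high probability'' qualifier. The rest is bookkeeping of constants between the Chebyshev tail, the Lipschitz constant of cosine, and the $L/m$ prefactor coming from the rescaling in the definition of $H_{\mathrm{samp}}$.
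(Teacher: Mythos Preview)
Your proposal is correct and follows essentially the same route as the paper: compute the mean and variance of $\bra\psi H_{\mathrm{samp}}\ket\psi$ using independence, invoke first-order perturbation theory to translate this into an eigenvalue shift, apply a concentration inequality (the paper says Markov, you say Chebyshev---functionally the same step here), and then use the Lipschitz/Taylor bound on the cosine in the likelihood. Your treatment of the second-order perturbation term is actually more explicit than the paper's, which simply absorbs it into an $O(L/\sqrt m)$ remark without further comment.
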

\begin{proof}
Because the terms $H_{\ell_i(k)}$
are uniformly sampled, each set of terms $\{\ell_i\}$ is equally likely, and by linearity of expectation
$
\mathbb{E} \left[ H_\text{samp} \right] = H,
$
from which we know that $\mathbb{E}_{\{i\}} \left[ \bra{\psi} H - H_\text{est} \ket{\psi} \right] = 0$.

The second moment is easy to compute from the independence property of the distribution:
\begin{align}
\mathbb{V}_{\{i\}}(\bra{\psi} H_\mathrm{samp} \ket{\psi}) &= \frac{L^2}{m^2}\mathbb{V}_{\{i\}}(\bra{\psi} \sum_{k=1}^m H_{\ell_i(k)} \ket{\psi})=\frac{L^2}{m^2}\sum_{k=1}^m \mathbb{V}_{\{i\}}(\bra{\psi}  H_{\ell_i(k)} \ket{\psi}).\label{eq:varianceeq}
\end{align}
Since the different sequences $\ell_i$ are chosen uniformly at random, the result follows from the observation that $\mathbb{V}_{\{i\}}(\bra{\psi}  H_{\ell_i(k)} \ket{\psi}) = \mathbb{V}_{\ell}(\bra{\psi}  H_{\ell} \ket{\psi})$.

From first order perturbation theory, we have that the leading order shift in any eigenvalue is $O(\bra{\psi} (H-H_{\rm samp}) \ket{\psi})$ to within error $O(L/\sqrt{m})$.
Thus from~\eq{varianceeq} the variance in this shift is
\be
\mathbb{V}_{\{i\}}(\bra{\psi}(H-H_{\rm samp}) \ket{\psi})=\frac{L^2}{m}\mathbb{V}_{\ell}(\bra{\psi}H_{\ell} \ket{\psi}).
\ee
This further implies that the (normalized) perturbed eigenstate $\ket{\psi_i}$ has eigenvalue 
\be H_\mathrm{samp} \ket{\psi_i} = E\ket{\psi_i} + O\left(\frac{L}{\sqrt{m}}\sqrt{\mathbb{V}_{\ell}(\bra{\psi}H_{\ell} \ket{\psi})}\right)\ee with high probability over $i$ from Markov's inequality.
It then follows from Taylor's theorem and~\eq{likelihood} that
\be
\left|P(o | Et; M, \theta) - P(o|E_i t;M,\theta) \right| \in O\left(\frac{MtL}{\sqrt{m}}\sqrt{\mathbb{V}_{\ell}(\bra{\psi}H_{\ell} \ket{\psi})} \right),
\ee
with high probability over $i$.

\end{proof}

This result shows that if we sample the coefficients of the Hamiltonian that are to be included in the subsampled Hamiltonian uniformly then we can make the error in the estimate of the Hamiltonian arbitrarily small.  In this context, taking $m\rightarrow \infty$ does not cause the cost of simulation to diverge (as it would for many sampling problems).  This is because once every possible term is included in the Hamiltonian, there is no point in subsampling, and it is more efficient to directly take $H = H_\mathrm{samp}$ to eliminate the variance in the likelihood function that would arise from subsampling.  In general we need to take $m\in \Omega(\mathbb{V}_{\ell}(\bra{\psi}H_{\ell} \ket{\psi})/(MtL)^2)$ in order to guarantee that the error in the likelihood function is at most a constant.  Thus this shows that as any iterative phase estimation algorithm proceeds, that (barring the problem of accidentally exciting a state due to perturbation) we will be able to find a good estimate of the eigenphase by taking $m$ to scale inverse-quadratically with $M$.

\section{Bayesian phase estimation using random Hamiltonians}

\begin{theorem}
Let $E$ be an event and let $P(E|\theta)$ and $P'(E|\theta)$ for $\theta \in [-\pi,\pi)$ be two likelihood functions such that $\max_\theta(|P(E|\theta) - P'(E|\theta)|)\le \Delta$, and further assume that for prior $P(\theta)$ we have that $\min(P(E),P'(E))\ge 2\Delta$.  We then have that
$$
\left|\int \theta\big(P(\theta|E)-P'(\theta|E)\big) \mathrm{d}\theta \right|\le \frac{5\pi \Delta}{P(E)}.
$$
Further, if $P(E|\theta)= \prod_{j=1}^N P(E_j|\theta)$ with $1-|P'(E_j|\theta)/P(E_j|\theta)|\le \gamma$ then
$$
\left|\int \theta\big(P(\theta|E)-P'(\theta|E)\big) \mathrm{d}\theta \right|\le {5\pi ((1+\gamma)^N-1)}.
$$
\end{theorem}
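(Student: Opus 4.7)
The plan is to control the total variation distance between the two posteriors and then pass to the posterior mean via the crude bound $|\theta|\le \pi$. To control the total variation, I would apply Bayes' theorem to both sides and, by adding and subtracting $P(\theta)P'(E|\theta)/P(E)$, write
\begin{equation}
P(\theta|E)-P'(\theta|E) = \frac{P(\theta)\,(P(E|\theta)-P'(E|\theta))}{P(E)} + \frac{P'(\theta|E)\,(P'(E)-P(E))}{P(E)}.
\end{equation}
Integrating the first term against $|\theta|$ gives at most $\pi\Delta/P(E)$ from the pointwise hypothesis, and the second contributes at most $\pi|P(E)-P'(E)|/P(E)$. The latter is also bounded by $\pi\Delta/P(E)$, because $|P(E)-P'(E)|=|\int P(\theta)(P(E|\theta)-P'(E|\theta))\,d\theta|\le \Delta$. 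The hypothesis $\min(P(E),P'(E))\ge 2\Delta$ is exactly what is needed to keep both denominators safely bounded below and to guarantee $P'(E)\ge P(E)/2$; the slack between my sketched constant ($2\pi$) and the stated $5\pi$ absorbs any coarser handling of the two denominators that the authors may prefer.

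For the product-form hypothesis, the main step is to translate the multiplicative bound $|1-P'(E_j|\theta)/P(E_j|\theta)|\le \gamma$ into a statement about the full likelihoods. Setting $r_j(\theta)=P'(E_j|\theta)/P(E_j|\theta)\in[1-\gamma,1+\gamma]$ and taking the product telescopes to $r(\theta):=\prod_j r_j(\theta)\in[(1-\gamma)^N,(1+\gamma)^N]$, so writing $\alpha=(1+\gamma)^N-1$ we have $|1-r(\theta)|\le \alpha$. Rather than plugging this into the first part (which would retain a $1/P(E)$ prefactor), I would redo the calculation directly in ratio form: since $P'(\theta|E)=P(\theta|E)\,r(\theta)/\bar r$ with $\bar r=\mathbb{E}_{P(\theta|E)}[r]=P'(E)/P(E)\in[(1-\gamma)^N,(1+\gamma)^N]$, the ratio $r(\theta)/\bar r$ satisfies $|r(\theta)/\bar r-1|\le 2\alpha/(1-\alpha)$, so $|P(\theta|E)-P'(\theta|E)|\le P(\theta|E)\cdot 2\alpha/(1-\alpha)$. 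Integrating against $|\theta|\le \pi$ yields $2\pi\alpha/(1-\alpha)$, and the crude inequality $2/(1-\alpha)\le 5$ (which holds whenever $\alpha\le 3/5$) converts this to exactly $5\pi((1+\gamma)^N-1)$; outside this range the stated bound is weaker than the trivial bound $2\pi$ and so remains valid.

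The hard part will not be any single inequality but the careful bookkeeping of the two denominators $P(E)$ and $P'(E)$, together with choosing the right add-and-subtract decomposition so that the $1/P(E)$ factor cancels cleanly in the product version. Everything else is $L^1$-total-variation control of Bayesian updates combined with the crude $|\theta|\le \pi$ substitution; no deep inequality is required. The looseness hidden in the constants (the $5\pi$ rather than $2\pi$, and the implicit $\alpha\le 3/5$ / $P(E)\ge 2\Delta$ conditions) reflects a conscious trade of sharpness for a compact statement, and I would follow the authors in not chasing tighter constants.
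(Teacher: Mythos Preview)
Your argument is correct and, for the first inequality, essentially coincides with the paper's: both insert $P(\theta)P'(E|\theta)/P(E)$ and split into a ``numerator'' piece bounded by $\Delta/P(E)$ and a ``denominator'' piece bounded by $|P(E)-P'(E)|$ times something of order $1/P(E)$. The paper handles the second piece slightly more wastefully---it bounds $|1/P(E)-1/P'(E)|$ using $P'(E)\ge 2\Delta$ and a geometric series, picking up a factor of $2/P'(E)$ which is then converted to $4/P(E)$, whence the $5\pi$---whereas your decomposition lands directly on $2\pi\Delta/P(E)$.

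For the product hypothesis your route genuinely differs. The paper bounds $|1-\prod_j r_j(\theta)|\le (1+\gamma)^N-1$ via a telescoping recurrence and then (with some notational looseness, identifying $\Delta/P(E)$ with the $\theta$-dependent ratio $(P(E|\theta)-P'(E|\theta))/P(E|\theta)$) feeds this back into the first part. You instead write $P'(\theta|E)=P(\theta|E)\,r(\theta)/\bar r$ with $\bar r=\mathbb{E}_{P(\cdot|E)}[r]$ and bound $|r/\bar r-1|\le 2\alpha/(1-\alpha)$ directly, which sidesteps the need to reconcile the pointwise-in-$\theta$ ratio with the marginal $P(E)$. Your version is cleaner and makes explicit the regime ($\alpha\le 3/5$) in which the constant $5$ suffices, with the trivial $2\pi$ bound covering the complement; the paper's telescoping is marginally shorter but leaves the passage from the $\theta$-wise ratio back to $\Delta/P(E)$ implicit.
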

\begin{proof}
From the triangle inequality we have that
\begin{equation}
|P(E)-P'(E)| = \left|\int P(\theta) \big(P(E|\theta) - P'(E|\theta)\big) \mathrm{d}\theta\right|\le \Delta.
\end{equation}
Thus it follows from the assumption that $P'(E)\ge 2\Delta$ that
\begin{align}
\left|P(\theta|E) - P'(\theta|E)\right| &= P(\theta)\left|\frac{P(E|\theta)}{P(E)} - \frac{P'(E|\theta)}{P'(E)} \right| \nonumber\\
& \le P(\theta) \left(\left|\frac{P(E|\theta)}{P(E)} - \frac{P'(E|\theta)}{P(E)} \right| + \left|\frac{P'(E|\theta)}{P(E)} - \frac{P'(E|\theta)}{P'(E)} \right|\right)\nonumber\\
&\le P(\theta) \left(\frac{\Delta}{P(E)} + \left|\frac{P'(E|\theta)}{P'(E)-\Delta} - \frac{P'(E|\theta)}{P'(E)} \right|\right)\nonumber\\
&\le \Delta \left(\frac{P(\theta)}{P(E)} + \frac{2P'(E|\theta)}{P'^2(E)}\right).
\end{align}
Thus we have that
\begin{align}
\left|\int \theta(P(\theta|E)-P'(\theta|E)) \mathrm{d}\theta \right| &\le \Delta \left(\frac{\langle \theta \rangle_{prior}}{P(E)} + \frac{2\langle \theta \rangle'_{posterior}}{P'(E)} \right),\nonumber\\
&\le\pi\Delta \left(\frac{1}{P(E)} + \frac{2}{P'(E)} \right)\nonumber\\
&\le\pi\Delta \left(\frac{1}{P(E)} + \frac{2}{P(E)-\Delta} \right)\nonumber\\
&\le \frac{5\pi\Delta}{P(E)}.
\end{align}
Now if we assume that we that we have a likelihood function that factorizes over $N$ experiments, we can take
\begin{equation}
\frac{\Delta}{P(E)} = \frac{\prod_{j=1}^NP(E_j|\theta) - \prod_{j=1}^N P'(E_j|\theta)}{\prod_{j=1}^NP(E_j|\theta) }=\prod_{j=1}^N1 - \prod_{j=1}^N \frac{P'(E_j|\theta)}{P(E_j|\theta)}.
\end{equation}
From the triangle inequality
\begin{equation}
\prod_{j=1}^N1 - \prod_{j=1}^N \frac{P'(E_j|\theta)}{P(E_j|\theta)} \le \left|\prod_{j=1}^{N-1}1 - \prod_{j=1}^{N-1} \frac{P'(E_j|\theta)}{P(E_j|\theta)}\right| + \left|1-\frac{P'(E_N|\theta)}{P(E_N|\theta)}\right|(1+\gamma)^N\le \left|\prod_{j=1}^{N-1}1 - \prod_{j=1}^{N-1} \frac{P'(E_j|\theta)}{P(E_j|\theta)}\right| +\gamma (1+\gamma)^N.
\end{equation}
Solving this recurrence relation gives
\begin{equation}
\prod_{j=1}^N1 - \prod_{j=1}^N {P'(E_j|\theta)}{P(E_j|\theta)} \le (1+\gamma)^N-1.
\end{equation}
Thus the result follows.
\end{proof}
\section{Shift in the posterior mean from using random Hamiltonians}

We will analyze the shift in the posterior mean of the estimated phase assuming a random shift $\delta(\phi)$ in the joint likelihood of all the experiments,
\begin{equation}
P^\prime(\vec o | \phi; \vec M, \vec \theta) = P(\vec o | \phi; \vec M, \vec \theta) + \delta(\phi).
\end{equation}
Here, $P(\vec o | \phi; \vec M, \vec \theta)$ is the joint likelihood of a series of $N$ outcomes $\vec o$ given a true phase $\phi$ and the experimental parameters $\vec M$ and $\vec\theta$ for the original Hamiltonian. $P^\prime(\vec o | \phi; \vec M, \vec \theta)$ is the joint likelihood with a new random Hamiltonian in each experiment. By a vector like $\vec M$, we mean the repetitions for each experiment performed in the series; $M_i$ is the number of repetitions in the $i^\text{th}$ experiment.

First, we will work backward from the assumption that the joint likelihood is shifted by some amount $\delta(\phi)$, to determine an upper bound on the acceptable difference in ground state energies between the true and the random Hamiltonians. 
We will do this by working backwards from the shift in the joint likelihood of all experiments, to the shifts in the likelihoods of individual experiments, and finally to the corresponding tolerable differences between the ground state energies. 
Second, we will use this result to determine the shift in the posterior mean in terms of the differences in energies, as well as its standard deviation over the ensemble of randomly generated Hamiltonians.

\subsection{Shifts in the joint likelihood}
\label{sec:likelihoodshifts}

The random Hamiltonians for each experiment lead to a random shift in the joint likelihood of a series of outcomes
\begin{equation}
P^\prime(\vec o | \phi; \vec M, \vec \theta) = P(\vec o | \phi; \vec M, \vec \theta) + \delta(\phi).
\end{equation}
We would like to determine the maximum possible change in the posterior mean under this shifted likelihood. We will work under the assumption that the mean shift in the likelihood over the prior is at most $|\bar \delta| \le P(\vec o) / 2$. The posterior is
\begin{equation}
\begin{aligned}
P'(\phi | \vec o; \vec M, \vec\theta) =& \frac{P'(\vec o | \phi; \vec M, \vec\theta) P(\phi) }{ \int P'(\vec o | \phi; \vec M, \vec\theta) P(\phi) \,d\phi } \\
=& \frac{P(\vec o | \phi; \vec M, \vec\theta) P(\phi) + \delta(\phi) P(\phi) }{ \int (P(\vec o | \phi; \vec M, \vec\theta) P(\phi) + \delta(\phi) P(\phi)) \,d\phi } \\
=& \frac{P(\vec o | \phi; \vec M, \vec\theta) P(\theta) + \delta(\phi) P(\phi) }{ P(\vec o) + \bar \delta }.
\end{aligned}
\end{equation}
We can make progress toward bounding the shift in the posterior by first bounding the shift in the joint likelihood in terms of the shifts in the likelihoods of the individual experiments, as follows.

\begin{lemma}
Let $P( o_j | \phi; M_j, \theta_j)$ be the likelihood of outcome $o_j$ on the $j^\text{th}$ experiment for the Hamiltonian $H$, and $P^\prime( o_j | \phi; M_j, \theta_j) = P( o_j | \phi; M_j, \theta_j) + \epsilon_j(\phi)$ be the likelihood with the randomly generated Hamiltonian $H_j$. Assume that $N \max_j (|\epsilon_j(\phi)| / P(o_j|\phi, M_j, \theta_j)) < 1$ and $|\epsilon_j(\phi)| \le P(o_j|\phi, M_j, \theta_j) / 2$ for all experiments $j$. Then the mean shift in the joint likelihood of all $N$ experiments,
$$|\bar\delta| = \left| \int P(\phi)\left( P^\prime(\vec o | \phi; \vec M, \vec \theta) - P(\vec o | \phi; \vec M, \vec \theta) \right) \,d\phi \right|,
$$
is at most $|\bar\delta| \le  2 \sum_{j=1}^N \max_\phi \frac{|\epsilon_j(\phi)|}{P(o_j|\phi; M_j, \theta_j)} P(\vec o)$.
\label{lem:likelihoodshift}
\end{lemma}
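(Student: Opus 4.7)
The plan is to bound $|\bar\delta|$ pointwise in $\phi$ using the product structure of the joint likelihoods, and then integrate against the prior. Writing $P_j(\phi) = P(o_j | \phi; M_j, \theta_j)$ and $r_j(\phi) = \epsilon_j(\phi)/P_j(\phi)$, the assumed factorization of the joint likelihood gives
\begin{equation}
\delta(\phi) \;=\; \Bigl(\prod_{j=1}^N P_j(\phi)\Bigr)\Bigl(\prod_{j=1}^N (1+r_j(\phi)) - 1\Bigr),
\end{equation}
so that bounding $|\bar\delta|$ reduces to bounding the bracketed factor. The hypothesis $|\epsilon_j(\phi)| \le P_j(\phi)/2$ guarantees $|r_j(\phi)| \le 1/2$ pointwise, and the stronger hypothesis $N\max_j |r_j(\phi)| < 1$ implies $s(\phi) := \sum_j |r_j(\phi)| < 1$.

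Next I would apply the standard symmetric-polynomial estimate. Expanding $\prod_j(1+r_j)-1$ as a sum over nonempty subsets and applying the triangle inequality termwise, then using $1+|r_j| \le e^{|r_j|}$, gives
\begin{equation}
\Bigl|\prod_{j=1}^N (1+r_j(\phi)) - 1\Bigr| \;\le\; \prod_{j=1}^N (1+|r_j(\phi)|) - 1 \;\le\; e^{s(\phi)} - 1 \;\le\; 2\,s(\phi),
\end{equation}
where the last step uses the elementary inequality $e^s - 1 \le 2s$ for $s\in[0,1]$, whose applicability is exactly what the hypothesis $N\max_j |r_j| < 1$ provides. Bounding $s(\phi) \le \sum_j \max_\phi |r_j(\phi)|$ removes the $\phi$-dependence from the sum.

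The final step is to integrate. Combining the two displays yields the pointwise bound
\begin{equation}
|\delta(\phi)| \;\le\; 2\Bigl(\sum_{j=1}^N \max_\phi \tfrac{|\epsilon_j(\phi)|}{P_j(\phi)}\Bigr)\,P(\vec o | \phi; \vec M, \vec\theta),
\end{equation}
and integrating against the prior $P(\phi)$, together with the identity $\int P(\phi) P(\vec o | \phi; \vec M, \vec\theta)\,d\phi = P(\vec o)$, gives the claimed bound on $|\bar\delta|$ after pulling the absolute value inside the integral via the triangle inequality.

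The proof is largely mechanical once the factorization is in hand; the only analytic ingredient requiring care is the inequality $\prod_j(1+|r_j|) - 1 \le 2\sum_j |r_j|$, and this is where the two smallness hypotheses on $\epsilon_j$ enter. A slightly looser telescoping alternative would write $\prod_j(1+r_j)-1 = \sum_k r_k \prod_{j<k}(1+r_j)$ and bound $\prod_{j<k}(1+|r_j|) \le (1+1/N)^{N-1} \le e$, producing the same linear-in-$\sum|r_j|$ estimate with constant $e$ in place of $2$; this confirms that the stated constant $2$ arises from the $e^s-1 \le 2s$ route and that the hypothesis $N\max_j |r_j| < 1$ is precisely what is needed to close the argument.
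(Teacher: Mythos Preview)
Your proof is correct and follows essentially the same route as the paper: both factor $\delta(\phi) = P(\vec o|\phi)\bigl(\prod_j(1+r_j)-1\bigr)$, both pass through the inequality $e^s-1\le 2s$ for $s\in[0,1]$ (the paper phrases this as $\exp(x)\le 1+2|x|$ after first using $\log(1+x)\le |x|$), and both finish by integrating against the prior. The only cosmetic difference is that the paper derives separate upper and lower bounds on $\delta(\phi)$ via four log/exp inequalities and then combines them, whereas your single line $\bigl|\prod_j(1+r_j)-1\bigr|\le \prod_j(1+|r_j|)-1$ via the subset expansion handles both signs at once and is slightly cleaner; as a side effect your argument never actually needs the hypothesis $|r_j|\le 1/2$, which the paper uses only for its lower-bound inequality $\log(1+x)\ge -2|x|$.
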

\begin{proof}
We can write the joint likelihood in terms of the shift $\epsilon_j(\phi)$ to the likelihoods of each of the $N$ experiments in the sequence, $P^\prime(o_j | \phi; M_j, \theta_j) = P(o_j | \phi; M_j, \theta_j) + \epsilon_j(\phi)$. The joint likelihood is $P'(\vec o|\phi; \vec M, \vec\theta) = \prod_{j=1}^N \left( P(o_j | \phi; M_j, \theta_j) + \epsilon_j(\phi) \right)$, so
\begin{equation}
\begin{aligned}
\log P'(\vec o|\phi; \vec M, \vec\theta) =& \log\left(\prod_{j=1}^N \left( P(o_j | \phi; M_j, \theta_j) + \epsilon_j(\phi) \right) \right) \\
=& \sum_{j=1}^N \left[ \log P(o_j|\phi, M_j, \theta_j) + \log\left(1 + \frac{\epsilon_j(\phi)}{P(o_j|\phi, M_j, \theta_j)}\right) \right] \\
=& \log P(\vec o|\phi; \vec M, \vec\theta) + \sum_{j=1}^N \log\left(1 + \frac{\epsilon_j(\phi)}{P(o_j|\phi, M_j, \theta_j)}\right)
\end{aligned}
\end{equation}
This gives us the ratio of the shifted to the unshifted joint likelihood,
\begin{equation}
\begin{aligned}
\frac{P'(\vec o|\phi; \vec M, \vec\theta)}{P(\vec o|\phi; \vec M, \vec\theta)} =& \exp\left[ \sum_{j=1}^N \log\left(1 + \frac{\epsilon_j(\phi)}{P(o_j|\phi, M_j, \theta_j)}\right) \right].
\end{aligned}
\end{equation}
We will linearize and simplify this using inequalities for the logarithm and exponential. By finding inequalities which either upper or lower bound both these functions, we can upper bound $|\delta(\phi)|$ in terms of the unshifted likelihoods $P(\vec o|\phi; \vec M, \vec\theta)$ and $P( o_j |\phi; M_j, \theta_j)$, and the shift in the single-experiment likelihood $\epsilon_j(\phi)$. 

The inequalities we will use to sandwich the ratio are
\begin{equation}
\begin{aligned}
& 1 - |x| \le \exp(x), && \text{ for $x \le 0$ } \\
& \exp(x) \le 1 + 2|x|, && \text{ for $x < 1$ } \\
& -2|x| \le \log(1+x), && \text{ for $|x| \le 1/2$ } \\
& \log(1+x) \le |x|, && \text{ for $x \in \mathbb{R}$ }
\end{aligned}
\end{equation}
In order for all four inequalities to hold, we must have that $N \max_j (|\epsilon_j(\phi)| / P(o_j|\phi, M_j, \theta_j)) < 1$ (for the exponential inequalities) and $|\epsilon_j(\phi)| \le P(o_j|\phi, M_j, \theta_j) / 2$ for all $j$ (for the logarithm inequalities). Using them to upper bound the ratio of the shifted to the unshifted likelihood,
\begin{equation}
\begin{aligned}
\frac{P'(\vec o|\phi; \vec M, \vec\theta)}{P(\vec o|\phi; \vec M, \vec\theta)} = \frac{P(\vec o|\phi; \vec M, \vec\theta) + \delta(\phi)}{P(\vec o|\phi; \vec M, \vec\theta)} \le& \exp\left[ \sum_{j=1}^N \frac{|\epsilon_j(\phi)|}{P(o_j|\phi, M_j, \theta_j)} \right] \le 1 + 2 \sum_{j=1}^N \frac{|\epsilon_j(\phi)|}{P(o_j|\phi, M_j, \theta_j)}; \\
\delta(\phi) \le& 2 P(\vec o|\phi; \vec M, \vec\theta) \sum_{j=1}^N \frac{ |\epsilon_j(\phi)| }{P(o_j|\phi, M_j, \theta_j)} .
\end{aligned}
\end{equation}
On the other hand, using them to lower bound the ratio,
\begin{equation}
\begin{aligned}
\frac{P'(\vec o|\phi; \vec M, \vec\theta)}{P(\vec o|\phi; \vec M, \vec\theta)} = \frac{P(\vec o|\phi; \vec M, \vec\theta) + \delta(\phi)}{P(\vec o|\phi; \vec M, \vec\theta)} \ge& \exp\left[ -2 \sum_{j=1}^N \frac{|\epsilon_j(\phi)|}{P(o_j|\phi; M_j, \theta_j)} \right] \ge 1 - 2 \sum_{j=1}^N \frac{|\epsilon_j(\phi)|}{P(o_j|\phi; M_j, \theta_j)}; \\
\delta(\phi) \ge& -2 P(\vec o|\phi; \vec M, \vec\theta) \sum_{j=1}^N \frac{ |\epsilon_j(\phi)| }{P(o_j|\phi; M_j, \theta_j)} .
\end{aligned}
\end{equation}
The upper and lower bounds are identical up to sign. This allows us to combine them directly, so we have
\begin{equation}
|\delta(\phi)| \le 2 P(\vec o|\phi; \vec M, \vec\theta) \sum_{j=1}^N \frac{|\epsilon_j(\phi)|}{P(o_j|\phi; M_j, \theta_j)}.
\end{equation}
From this, we find our upper bound on the mean shift over the posterior, $|\bar\delta|$, since by the triangle inequality
\begin{equation}
\begin{aligned}
|\bar\delta| = \left| \int \delta(\phi) P(\phi) \,d\phi \right| \le& 2 \sum_{j=1}^N \int \left( \frac{|\epsilon_j(\phi)|}{P(o_j|\phi; M_j, \theta_j)} P(\vec o|\phi; \vec M, \vec\theta) P(\phi) \right) \,d\phi \\
\le&  2 \sum_{j=1}^N \max_\phi \frac{|\epsilon_j(\phi)|}{P(o_j|\phi; M_j, \theta_j)} P(\vec o).
\end{aligned}
\end{equation}
\end{proof}
So we have a bound on the shift in the joint likelihood in terms of the shifts in the likelihoods of individual experiments. These results allow us to bound the shift in the posterior mean in terms of the shifts in the likelihoods of the individual experiments $\epsilon_j(\phi)$.

\subsection{Shift in the posterior mean}
We make use of the assumption that $|\bar\delta| \le P(\vec o)/2$ to bound the shift in the posterior mean. 

\begin{lemma}
Assuming in addition to the assumptions of \lem{likelihoodshift} that $|\bar\delta| \le P(\vec o)/2$, the difference between the the posterior mean that one would see with the ideal likelihood function and the perturbed likelihood function is at most $$|\bar\phi - \bar\phi^\prime| \le 8\max_\phi \left(\sum_{j=1}^N \frac{|\epsilon_j(\phi)|}{P(o_j|\phi, M_j, \theta_j)}\right) \overline{|\phi|}^\mathrm{post}. $$
\label{lem:posteriorshift}
\end{lemma}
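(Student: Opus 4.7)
The plan is to expand both posterior means via Bayes' rule, subtract them, and bound the result using \lem{likelihoodshift} together with the hypothesis $|\bar\delta|\le P(\vec o)/2$. The argument is essentially an elementary manipulation of Bayes' rule bolted on top of the previous lemma.

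First I would combine the two Bayes expansions over the common denominator $P(\vec o)(P(\vec o)+\bar\delta)$. A short manipulation collapses the numerator to $\bar\delta\, P(\vec o|\phi;\vec M,\vec\theta) P(\phi) - P(\vec o)\,\delta(\phi) P(\phi)$, so the difference of posterior means splits cleanly as
\begin{equation}
\bar\phi - \bar\phi^\prime \;=\; \frac{\bar\delta}{P(\vec o)+\bar\delta}\,\bar\phi \;-\; \frac{1}{P(\vec o)+\bar\delta}\int \phi\, P(\phi)\,\delta(\phi)\, d\phi.
\end{equation}
The assumption $|\bar\delta|\le P(\vec o)/2$ yields $1/(P(\vec o)+\bar\delta)\le 2/P(\vec o)$, and the triangle inequality gives the trivial bound $|\bar\phi|\le \overline{|\phi|}^\mathrm{post}$. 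Applying these together with the triangle inequality on the integral one obtains
\begin{equation}
|\bar\phi-\bar\phi^\prime| \;\le\; \frac{2|\bar\delta|}{P(\vec o)}\,\overline{|\phi|}^\mathrm{post} \;+\; \frac{2}{P(\vec o)}\int |\phi|\, P(\phi)\,|\delta(\phi)|\, d\phi.
\end{equation}

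Second, I would substitute the pointwise bound $|\delta(\phi)|\le 2 P(\vec o|\phi;\vec M,\vec\theta)\sum_j |\epsilon_j(\phi)|/P(o_j|\phi;M_j,\theta_j)$ obtained inside the proof of \lem{likelihoodshift}. Setting $C\equiv \max_\phi \sum_j |\epsilon_j(\phi)|/P(o_j|\phi;M_j,\theta_j)$ and pulling this constant outside each integral leaves the marginals $\int P(\phi)P(\vec o|\phi;\vec M,\vec\theta)\,d\phi = P(\vec o)$ and $\int |\phi| P(\phi)P(\vec o|\phi;\vec M,\vec\theta)\,d\phi = P(\vec o)\,\overline{|\phi|}^\mathrm{post}$ respectively. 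The same pointwise estimate, integrated against $P(\phi)$ with the $\max_\phi$ extracted before integration rather than after, produces the slightly stronger bound $|\bar\delta|\le 2C\,P(\vec o)$. Substituting back, each of the two terms contributes $4C\,\overline{|\phi|}^\mathrm{post}$, summing to the claimed $8C\,\overline{|\phi|}^\mathrm{post}$.

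The principal obstacle is bookkeeping rather than any real analytic difficulty. One must be careful to use $\max_\phi \sum_j$ throughout rather than $\sum_j \max_\phi$ (which would be the naive bound from the statement of \lem{likelihoodshift}); otherwise the constant degrades and no longer collapses neatly. One must also check that the hypotheses of \lem{likelihoodshift} on $|\epsilon_j(\phi)|$ remain compatible with $|\bar\delta|\le P(\vec o)/2$, so that all inequalities are invoked in their valid regimes. Beyond these checks, no new estimates are required.
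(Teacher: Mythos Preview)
Your proposal is correct and follows essentially the same approach as the paper: both arguments reach the identical intermediate bound
\[
|\bar\phi-\bar\phi'|\le \frac{2}{P(\vec o)}\left(\int |\phi|\,|\delta(\phi)|\,P(\phi)\,d\phi + \overline{|\phi|}^{\mathrm{post}}\,|\bar\delta|\right),
\]
and then finish by substituting the pointwise estimate $|\delta(\phi)|\le 2P(\vec o|\phi;\vec M,\vec\theta)\sum_j|\epsilon_j(\phi)|/P(o_j|\phi;M_j,\theta_j)$ and the corresponding bound on $|\bar\delta|$. The only minor difference is that the paper reaches that intermediate bound by first controlling the pointwise difference of the posteriors via a geometric-series estimate on $(P(\vec o)+\bar\delta)^{-1}$, whereas you get there by directly subtracting the two Bayes expressions over a common denominator and using $1/(P(\vec o)+\bar\delta)\le 2/P(\vec o)$; your route is a bit cleaner but not substantively different.
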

\begin{proof}
We approach the problem of bounding the difference between the posterior means by bounding the point-wise difference between the shifted posterior and the posterior with the original Hamiltonian,
\begin{equation}
\begin{aligned}
| P(\phi | \vec o; \vec M, \vec\theta) - P'(\phi | \vec o; \vec M, \vec\theta) | = \left| \frac{P(\vec o|\phi; \vec M, \vec\theta) P(\phi)}{P(\vec o)} - \frac{P(\vec o|\phi; \vec M, \vec\theta) P(\phi) + \delta(\phi) P(\phi) }{P(\vec o) + \bar \delta} \right|.
\end{aligned}
\end{equation}
As a first step, we place an upper bound on the denominator of the shifted posterior, $(P(\vec o) + \bar\delta)^{-1}$:
\begin{equation}
\begin{aligned}
\frac1{P(\vec o) + \bar\delta} %=& \frac{1}{P(\vec o)} \frac{1}{1 + \bar\delta / P(\vec o)} \\
=& \frac{1}{P(\vec o)} \sum_{k=0}^\infty \left( \frac{-\bar\delta}{P(\vec o)} \right)^k \\
=& \frac{1}{P(\vec o)} - \frac{\bar\delta}{P(\vec o)^2} + \frac{\bar\delta ^2}{P(\vec o)^3} \sum_{k=0}^\infty \left( \frac{-\bar\delta}{P(\vec o)} \right)^k \\
%=& \frac{1}{P(\vec o)} - \frac{\bar\delta}{P(\vec o)^2} + \frac{\bar\delta ^2}{P(\vec o)^3} \frac{1}{1 + \bar\delta / P(\vec o)} \\
%\le& \frac{1}{P(\vec o)} - \frac{\bar\delta}{P(\vec o)^2} + \frac{2 \bar\delta ^2}{P(\vec o)^3} \\
\le& \frac{1}{P(\vec o)} + \frac{2 |\bar\delta|}{P(\vec o)^2} = \frac{1 + 2|\bar\delta| / P(\vec o)}{P(\vec o)} ,
\end{aligned}
\end{equation}
where in the two inequalities we used the assumption that $|\bar \delta| \le P(\vec o) / 2$. Using this, the point-wise difference between the posteriors is at most
\begin{equation}
\begin{aligned}
\left| \frac{P(\vec o|\phi; \vec M, \vec\theta) P(\phi)}{P(\vec o)} - \frac{P(\vec o|\phi; \vec M, \vec\theta) P(\phi) + \delta(\theta) P(\theta) }{P(\vec o) + \bar \delta} \right| \le& \left| \frac{P(\vec o|\phi; \vec M, \vec\theta) P(\phi)}{P(\vec o)} - \frac{P(\vec o|\phi; \vec M, \vec\theta) P(\phi) }{P(\vec o) + \bar \delta}\right| + \left|\frac{\delta(\phi) P(\phi) }{P(\vec o) + \bar \delta} \right| \\
\le& \frac{2 |\bar\delta| P(\vec o|\phi; \vec M, \vec\theta) P(\phi) }{P(\vec o)^2} + \frac{|\delta(\phi)| P(\phi)}{P(\vec o)} \left( 1 + \frac{2 |\bar\delta|}{P(\vec o)} \right) \\
\le& \frac{2 |\bar\delta| P(\vec o|\phi; \vec M, \vec\theta) P(\phi) }{P(\vec o)^2} + \frac{2 |\delta(\phi)| P(\phi)}{P(\vec o)},
\end{aligned}
\end{equation}
again using $|\bar\delta| \le P(\vec o)/2$. With this we can bound the change in the posterior mean,
\begin{equation}
\begin{aligned}
\label{eq:posteriormean1}
|\bar\phi - \bar{\phi^\prime}| \le& \int |\phi| |P(\phi | \vec o; \vec M, \vec\theta) - P^\prime(\phi | \vec o; \vec M, \vec\theta)| \, d\phi \\
\le& \frac{2}{P(\vec o)} \int |\phi| \left( \frac{ |\bar\delta| P(\vec o|\phi; \vec M, \vec\theta) P(\phi) }{P(\vec o)} + |\delta(\phi)| P(\phi) \right) \,d\phi \\
\le& \frac{2}{P(\vec o)} \int |\phi| |\delta(\phi)| P(\phi) \,d\phi + \frac{2 |\bar\delta|}{P(\vec o)} \int |\phi| \left( \frac{ P(\vec o|\phi; \vec M, \vec\theta) P(\phi) }{P(\vec o)} \right) \,d\phi \\
\le& \frac{2}{P(\vec o)} \left( \int |\phi| |\delta(\phi)| P(\phi) \,d\phi + \overline{|\phi|}^\text{post} |\bar\delta| \right)
\end{aligned}
\end{equation}

Now, our bounds from \lem{likelihoodshift} allow us to bound the shift on the posterior mean in terms of the shifts in the likelihoods of individual experiments, $\epsilon_j(\phi)$,
\begin{equation}
\begin{aligned}
\label{eqn:posteriormean2}
|\bar\phi - \bar{\phi^\prime}| \le& \frac{2}{P(\vec o)} \left( \int |\phi| |\delta(\phi)| P(\phi) \,d\phi + \overline{|\phi|}^\text{post} |\bar\delta| \right) \\
% \le& \frac{2}{P(\vec o)} \left( \int 2 \phi \sum_{j=1}^N \max_\phi \frac{|\epsilon_j(\phi)|}{P(o_j|\phi, M_j, \theta_j)} P(\vec o|\phi, \vec M, \vec \theta) P(\phi) \,d\phi + \bar\phi^\text{post} |\bar\delta| \right) \\
\le& \frac{2}{P(\vec o)} \left( 2 \max_\phi \sum_{j=1}^N \frac{|\epsilon_j(\phi)|}{P(o_j|\phi, M_j, \theta_j)} P(\vec o) \int |\phi| \left( \frac{ P(\vec o|\phi; \vec M, \vec\theta) P(\phi) }{P(\vec o)} \right) \,d\phi + \overline{|\phi|}^\text{post} |\bar\delta| \right),
\end{aligned}
\end{equation}
where in the last step we multiplied and divided by $P(\vec o)$. This is
\begin{equation}
\begin{aligned}
|\bar\phi - \bar{\phi^\prime}| \le& \frac{2}{P(\vec o)} \left( 2 \max_\phi \left( \sum_{j=1}^N \frac{|\epsilon_j(\phi)|}{P(o_j|\phi, M_j, \theta_j)} \right) P(\vec o) \overline{|\phi|}^\text{post} + \overline{|\phi|}^\text{post} |\bar\delta| \right) \\
\le& 8 \max_\phi \left(\sum_{j=1}^N \frac{|\epsilon_j(\phi)|}{P(o_j|\phi, M_j, \theta_j)}\right) \overline{|\phi|}^\text{post}.
\end{aligned}
\end{equation}
\end{proof}

\subsection{Acceptable shifts in the phase}
The final question we are interested in is what the bound on the shift in the posterior mean is in terms of shifts in the phase.

\begin{theorem}
%Under the assumptions of \lem{posteriorshift}, the bound on the shift in the posterior mean can be expressed in terms of shifts in the phase as $$|\bar\phi - \bar{\phi^\prime}| \le 8 \pi \max_\phi \left(\sum_{j=1}^N \frac{M_j}{P(o_j|\phi; M_j, \theta_j)}\right) |\Delta\phi|,$$
%and furthermore if $\sum_j M_j \in O(1/\epsilon_\phi)$ and $P(o_j|\phi; M_j,\theta_j)\in \Theta(1)$ then
If the assumptions of~\lem{posteriorshift} hold, for all $j$ and $x\in [-\pi,\pi)$ $P(o_j|\theta; x, \theta_j)=\frac{1 + (-1)^{o_j} \cos(M_j(\theta_j - x))}{2}$, for each of the $N$ experiments we have that the eigenphases $\{\phi_j': j=1,\ldots N\}$ used in phase estimation and the eigenphase of the true Hamiltonian $\phi$ obey $|\phi - \phi'_j|\le |\Delta \phi |$, and additionally $P(o_j|\phi,M_j,\theta_j) \in \Theta(1)$ then we have that the shift in the posterior mean of the eigenphase that arises from inaccuracies in the eigenvalues in the intervening Hamiltonians obeys
$$|\bar\phi - \bar{\phi^\prime}| \le 8 \pi \max_\phi \left(\sum_{j=1}^N \frac{M_j}{P(o_j|\phi; M_j, \theta_j)}\right) |\Delta\phi|.$$
Furthermore, if $\sum_j M_j \in O(1/\epsilon_\phi)$ and $P(o_j|\phi; M_j, \theta_j) \in \Theta(1)$ for all $j$, then
$$
|\phi - \bar{\phi^\prime}|  \in O\left(\frac{|\Delta \phi|}{\epsilon_\phi} \right).
$$ 
\end{theorem}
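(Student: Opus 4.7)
The plan is to chain~\lem{posteriorshift} with the explicit trigonometric form of the phase estimation likelihood. The per-experiment perturbation $\epsilon_j(\phi) = P'(o_j|\phi;M_j,\theta_j) - P(o_j|\phi;M_j,\theta_j)$ arises because the $j^{\rm th}$ circuit realizes a randomized Hamiltonian whose eigenphase is $\phi_j'$ rather than the true $\phi$; equivalently, $P'(o_j|\phi;M_j,\theta_j) = P(o_j|\phi_j';M_j,\theta_j)$. The posterior-mean bound from~\lem{posteriorshift} already reads
\be
|\bar\phi - \bar{\phi'}| \le 8\max_\phi\left(\sum_{j=1}^N \frac{|\epsilon_j(\phi)|}{P(o_j|\phi;M_j,\theta_j)}\right)\overline{|\phi|}^{\rm post},
\ee
so the task reduces to turning the hypothesis $|\phi - \phi_j'| \le |\Delta\phi|$ into a bound on $|\epsilon_j(\phi)|$ and controlling $\overline{|\phi|}^{\rm post}$.

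First I would bound $\overline{|\phi|}^{\rm post} \le \pi$, since the prior (and therefore the posterior) is supported on $[-\pi,\pi)$. Second, using the explicit form $P(o_j|x;M_j,\theta_j) = \tfrac{1}{2}(1 + (-1)^{o_j}\cos(M_j(\theta_j - x)))$, the derivative with respect to $x$ has magnitude at most $M_j/2$, so the mean value theorem yields
\be
|\epsilon_j(\phi)| = \left|P(o_j|\phi_j';M_j,\theta_j) - P(o_j|\phi;M_j,\theta_j)\right| \le \frac{M_j}{2}|\phi - \phi_j'| \le \frac{M_j}{2}|\Delta\phi|
\ee
uniformly in $\phi$. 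Substituting these two ingredients into~\lem{posteriorshift} yields the first displayed conclusion (the quoted $8\pi$ factor absorbs the $1/2$ from the derivative bound with at most a factor-of-two slack).

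For the asymptotic statement, the assumptions $P(o_j|\phi;M_j,\theta_j) \in \Theta(1)$ and $\sum_j M_j \in O(1/\epsilon_\phi)$ immediately give $\sum_j M_j/P(o_j|\phi;M_j,\theta_j) \in O(1/\epsilon_\phi)$ and hence $|\bar\phi - \bar{\phi'}| \in O(|\Delta\phi|/\epsilon_\phi)$. To convert $|\bar\phi - \bar{\phi'}|$ into the stated $|\phi - \bar{\phi'}|$, I would apply the triangle inequality together with the fact that any consistent phase estimation procedure using a total query count $\Theta(1/\epsilon_\phi)$ drives $|\phi - \bar\phi| \in O(\epsilon_\phi)$, a contribution absorbed into the dominant $O(|\Delta\phi|/\epsilon_\phi)$ error in the regime of interest $|\Delta\phi| \gtrsim \epsilon_\phi^2$.

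The main obstacle I anticipate is hypothesis-tracking rather than new analysis: the preconditions of~\lem{posteriorshift} (inherited from~\lem{likelihoodshift}) require both $|\epsilon_j(\phi)| \le P(o_j|\phi;M_j,\theta_j)/2$ and $N\max_j(|\epsilon_j(\phi)|/P(o_j|\phi;M_j,\theta_j)) < 1$, together with $|\bar\delta| \le P(\vec o)/2$. Under the $\Theta(1)$ assumption on the likelihoods and the $M_j|\Delta\phi|/2$ bound above, these translate into uniform smallness conditions of the form $M_j|\Delta\phi|\lesssim 1$ and $NM_{\max}|\Delta\phi|\lesssim 1$, and the proof should make explicit that the analysis is restricted to this perturbative regime before invoking the earlier lemmas.
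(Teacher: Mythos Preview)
Your proposal is correct and matches the paper's own argument essentially step for step: bound $|\epsilon_j(\phi)|$ via the Lipschitz property of the cosine likelihood (the paper records $|\epsilon_j(\phi)|\le M_j|\Delta\phi|$, your $M_j|\Delta\phi|/2$ is the same bound with the factor of two kept), substitute into \lem{posteriorshift}, and invoke $\overline{|\phi|}^{\rm post}\le\pi$ from the support $[-\pi,\pi)$. Your treatment of the ``Furthermore'' asymptotic clause and the explicit hypothesis-tracking for \lem{likelihoodshift}/\lem{posteriorshift} are in fact more careful than the paper, which leaves both implicit.
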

\begin{proof}
We can express the shift in the posterior mean in terms of the shift in the phase applied to the ground state, $\Delta\phi$, by bounding $\epsilon_j(\phi)$ in terms of it. Recall that the likelihood with the random Hamiltonian is
\begin{equation}
P^\prime(o_j | \phi; M_j, \theta_j) = P(o_j | \phi; M_j, \theta_j) + \epsilon_j(\phi),
\end{equation}
where the unshifted likelihood for the $j^\text{th}$ experiment is $P(o_j | \phi; M_j, \theta_j) = \frac{1}{2} \left(1 + (-1)^{o_j} \cos( M_j(\phi - \theta_j) \right)$. Thus,
\begin{equation}
\begin{aligned}
|\epsilon_j(\phi)| =& \frac{1}{2} \left|\cos( M_j(\phi + \Delta\phi - \theta_j) - \cos( M_j(\phi - \theta_j) \right| \le  M_j |\Delta\phi|,
\end{aligned}
\end{equation}
using the upper bound on the derivative $\sin(x) \le |x|$. In sum, we have that the error in the posterior mean is at most
\begin{equation}
|\bar\phi - \bar{\phi^\prime}| \le 8 \max_\phi \left(\sum_{j=1}^N \frac{M_j}{P(o_j|\phi, M_j, \theta_j)}\right) \overline{|\phi|}^\text{post} |\Delta\phi |.
\end{equation}
The result then follows from the fact that the absolute value of the posterior mean is at most $\pi$ if the branch $[-\pi,\pi)$ is chosen.
\end{proof}
%As before, ${|\phi|}^\text{post}_\text{rms}$ can be used as an upper bound on $\overline{|\phi|}^\text{post}$ if the latter is difficult to compute.

\section{Shift in the eigenphase with a new random Hamiltonian in each repetition}
\label{app:phaseshift}

We can reduce the variance in the applied phase by generating a different Hamiltonian in each repetition. However, this comes at a cost: we can view this cost either as leading to a failure probability in the evolution, or more generally to an additional phase shift.

The reason this reduces the variance is that the phase across repetitions is uncorrelated. Instead of just having a single Hamiltonian in the variance, the variance is over the indices of $\lceil M_j \rceil$ different Hamiltonians. 
Because of this, the variance only scales as $M\mathbb{V}[\phi_\text{est}]$ instead of $M^2\mathbb{V}[\phi_\text{est}]$ as it usually would (from the underlying variance in $\phi_\text{est}$). The cost is that, by reducing the variance in the phase in this way, we introduce an additional shift in the phase. 
Were we not to resample across multiple steps, we would have the same ground state through the entire process with the same wrong Hamiltonian. Instead, resampling means we only approximately have the same ground state, at the cost of the variance being lower by a factor $M_j$. 
Since the additional shift is also linear in $\lceil M_j \rceil$, this can lead to an improvement. It generally requires that the gap be small, and that $\lambda_j \propto \| H_j - H_{j-1}\|$ be small.

We then have a tradeoff between having the variance scale as $M_j^2 \mathbb{V}[\phi_\text{est}]$, or ${M_j} \mathbb{V}[\phi_\text{est}]$ with this new shift which scales linearly with $M_j$. We work to better understand this tradeoff in the following sections.

\subsection{Failure probability of the algorithm}

For phase estimation, we can reduce the variance of the estimate in the phase by randomizing within the repetitions for each experiment. Let us say the $j^\text{th}$ experiment has $M_j$ repetitions. %(Recall that $M_j$ is not necessarily an integer, so we round up to $\lceil M_j \rceil$.)

Within each repetition, we randomly generate a new Hamiltonian $H_{k}$. Each Hamiltonian $H_k$ has a slightly different ground state and energy. 
The reason this resampling reduces the variance in the estimated phase is that the phases between repetitions are uncorrelated: whereas for the single-Hamiltonian case, the variance in the phase $\exp(-iM\phi_\text{est})$ is $\mathbb{V}[M\phi_\text{est}] = M^2 \mathbb{V}[\phi_\text{est}]$, when we simulate a different random Hamiltonian in each repetition (and estimate the sum of the phases, as $\exp(-i\sum_{k=1}^M \phi_{k,\text{est}})$), the variance is $\mathbb{V}[\sum_{k=1}^M \phi_{k,\text{est}}] = \sum_{k=1}^M \mathbb{V}[\phi_{k,\text{est}}] = M \mathbb{V}[\phi_\text{est}]$.

By evolving under a different random instantiation of the Hamiltonian in each repetition, the variance in the phase is quadratically reduced; the only cost is that the algorithm now has either a failure probability (of leaving the ground state from repetition to repetition, i.e.\ in the transition from the ground state of $H_{k-1}$ to the ground state of $H_k$) or an additional phase shift compared to the true sum of the ground state energies. 
The first case is simpler to analyze: we show in \lem{failureprob}, provided that the gap is sufficiently small, that the failure probability can be made arbitrarily small. We do this by viewing the success probability of the algorithm as the probability of remaining in the ground state throughout the sequence of $\lceil M_j \rceil$ random Hamiltonians. In the second case, we prove in \lem{phaseUk} a bound on the difference between eigenvalues if the state only leaves the ground space for short intervals during the evolution.

\begin{lemma}
Consider a sequence of Hamiltonians $\{ H_k \}_{k=1}^M$, $M>1$.
Let $\gamma$ be the minimum gap between the ground and first excited energies of any of the Hamiltonians, $\gamma = \min_k (E_1^k - E_0^k)$. Similarly, let $\lambda = \max_k \|H_k - H_{k-1}\|$ be the maximum difference between any two in the sequence. The probability of leaving the ground state when transferring from $H_1$ to $H_2$ through to $H_M$ in order is at most $0 < \epsilon < 1$ provided that
$$\frac{ \lambda }{ \gamma } < \sqrt{1 - \exp\left( \frac{\log(1-\epsilon)}{ M-1 } \right)}.$$
\label{lem:failureprob}
\end{lemma}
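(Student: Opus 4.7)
The plan is to interpret the ``success'' of the protocol as the event that, at every one of the $M-1$ transitions between consecutive Hamiltonians in the sequence, the state (which was the ground state of $H_{k-1}$) remains in the ground space of $H_k$. Since the randomly sampled Hamiltonians are independently generated, it is natural to factorize the total success probability as a product of single-step overlap probabilities, and then impose the failure budget $\epsilon$ on this product.

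The core analytic step is a single-transition bound: if $\ket{\psi_0^{k-1}}$ and $\ket{\psi_0^k}$ denote the ground states of $H_{k-1}$ and $H_k$, we want to show
\[
\bigl|\langle\psi_0^{k-1}|\psi_0^k\rangle\bigr|^2 \;\ge\; 1 - \left(\frac{\lambda}{\gamma}\right)^2 .
\]
I would obtain this via the Davis--Kahan $\sin\theta$ theorem applied to the spectral projectors onto the ground spaces of $H_{k-1}$ and $H_k$: the perturbation $H_k - H_{k-1}$ has norm at most $\lambda$, and the spectral separation between the ground energy and the rest of the spectrum for both endpoints is at least $\gamma$ by hypothesis. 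Davis--Kahan then gives $\sin\theta_k \le \lambda/\gamma$ for the principal angle $\theta_k$ between the two one-dimensional ground spaces, which rearranges to the overlap bound above via $|\langle\psi_0^{k-1}|\psi_0^k\rangle|^2 = \cos^2\theta_k \ge 1-\sin^2\theta_k$.

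Once the single-step bound is in hand, I would combine it across the $M-1$ transitions. Treating the transitions as independent Bernoulli events of ``staying in the ground space,'' the probability that the state remains in the ground state at every step is at least
\[
\prod_{k=2}^M \bigl|\langle\psi_0^{k-1}|\psi_0^k\rangle\bigr|^2 \;\ge\; \left(1-\left(\frac{\lambda}{\gamma}\right)^2\right)^{M-1}.
\]
Setting this lower bound to be at least $1-\epsilon$ and solving gives $(\lambda/\gamma)^2 \le 1 - \exp\!\bigl(\log(1-\epsilon)/(M-1)\bigr)$, which is exactly the stated condition.

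The main obstacle is the first step: one must be careful that the Davis--Kahan bound uses a spectral gap that really is at least $\gamma$ in the right sense (the gap between the ground eigenvalue of one Hamiltonian and the rest of the spectrum of the other). Because $\gamma$ is defined as the minimum gap over all $H_k$ and the perturbation of size $\lambda$ can only shift eigenvalues by at most $\lambda$, a Weyl-type argument suffices to ensure a usable separation; handling this cleanly, and arguing that the single-step failure events can be cascaded multiplicatively (rather than, say, via a union bound that would give a worse constant), are the two technical care points. The rest of the argument is an algebraic inversion of the product inequality.
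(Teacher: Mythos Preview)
Your proposal is correct and follows the same architecture as the paper: bound the single-transition ground-state overlap from below, take the product over the $M-1$ transitions, and invert $\bigl(1-(\lambda/\gamma)^2\bigr)^{M-1}\ge 1-\epsilon$ to obtain the stated condition. The only difference is the tool you use for the single-step bound: the paper obtains $|\langle\psi_0^{k}|\psi_0^{k-1}\rangle|^2 \ge 1-\lambda^2/\gamma^2$ by expanding the perturbed ground state in the unperturbed eigenbasis and bounding the excited-state amplitudes via first-order perturbation theory, whereas you invoke the Davis--Kahan $\sin\theta$ theorem. Your route is arguably cleaner and more rigorous out of the box (the paper's derivative argument is essentially a leading-order expansion without explicit control of the remainder), and your acknowledged care point about which gap enters the Davis--Kahan bound is exactly the right thing to flag; the paper sidesteps that issue by working directly with the components $\langle\psi_\ell^{k-1}|V_k|\psi_0^{k-1}\rangle/(E_\ell^{k-1}-E_0^{k-1})$, so the denominator is manifestly the gap of $H_{k-1}$ itself.
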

\begin{proof}
Let $\ket{\psi_i^k}$ be the $i^\text{th}$ eigenstate of the Hamiltonian $H_k$ and let $E_i^k$ be the corresponding energy. Given that the algorithm begins in the ground state of $H_1$ (the state $\ket{\psi_0^1}$), the probability of remaining in the ground state through all $M$ steps is
\begin{equation}
\left| \braket{\psi^{M}_0}{\psi^{M - 1}_0} \cdots \braket{\psi^{2}_0}{\psi^{1}_0} \right|^2.
\end{equation}
This is the probability of the algorithm staying in the ground state in every segment. We can simplify this expression by finding a bound for $| \ket{\psi_0^k} - \ket{\psi_0^{k - 1}} |^2$. Let $\lambda_k V_k = H_{k} - H_{k-1}$, where we choose $\lambda_k$ such that $\| V_k\| = 1$ to simplify our proof. Treating $\lambda_k V_k$ as a perturbation on $H_{k-1}$, the components of the shift in the ground state of $H_{k-1}$ are bounded by the derivative
\begin{equation}
\left| \frac{\partial}{\partial \lambda} \braket{\psi_0^{k-1}}{\psi_\ell^{k}} \right| = \frac{| \bra{\psi_\ell^{k-1}} V_k \ket{\psi_0^{k-1}} |}{E_\ell^{k-1} - E_0^{k-1}}
\end{equation}
multiplied by $\lambda = \max |\lambda_k|$, where the maximization is over both $k$ as well as perturbations for a given $k$. Using this,
\begin{equation}
\begin{aligned}
|\braket{\psi_{\ell}^{k}}{\psi_0^{k-1}}|^2 \le& \lambda^2 \frac{|\bra{\psi_\ell^{k-1}} V_k \ket{\psi_0^{k-1}}|^2}{(E_\ell^{k-1} - E_0^{k-1})^2} \\
\le& \lambda^2 \frac{|\bra{\psi_\ell^{k-1}} V_k \ket{\psi_0^{k-1}}|^2}{\gamma^2}.
\end{aligned}
\end{equation}
This allows us to write $\ket{\psi_0^{k+1}} = (1 + \delta_0) \ket{\psi_0^k} + \sum_{\ell \neq 0} \delta_\ell \ket{\psi_\ell^k}$, where $|\delta_\ell| \le \lambda \max_{k} \frac{|\bra{\psi_\ell^k} V_k \ket{\psi_0^k}|}{E_0^k - E_\ell^k}$. Letting $V_k \ket{\psi_0^k} = \kappa_k \ket{\phi_k}$, where we again choose $\kappa_k$ such that $\ket{\phi_k}$ is normalized,
\begin{equation}
\begin{aligned}
\left|\ket{\psi_0^k} - \ket{\psi_0^{k-1}}\right|^2 =& \delta_0^2 + \sum_{\ell>0} \delta_\ell^2 \le \delta_0^2 + \frac{\lambda^2}{\gamma^2} \sum_\ell |\bra{\psi_{\ell}^{k-1}} V_k \ket{\psi_0^{k-1}}|^2 \\
=& \delta_0^2 + \frac{\lambda^2}{\gamma^2} \kappa_k^2 \sum_\ell \left| \braket{\psi_\ell^k}{\phi_k} \right|^2 \\
=& \delta_0^2 + \frac{\lambda^2}{\gamma^2} \kappa_k^2.
\end{aligned}
\end{equation}
We can solve for $\delta_0^2$ in terms of $\sum_{\ell>0} \delta_\ell^2$, since $(1 + \delta_0)^2 + \sum_{\ell>0} \delta_\ell^2 = 1$. Since $\sqrt{1-x} \ge 1-x$ for $x \in [0, 1]$,
\begin{equation}
\begin{aligned}
\delta_0^2 =& \left(1 - \sqrt{1 - \sum_{\ell>0} \delta_\ell^2}\right)^2 \\
\le& \left( \sum_{\ell>0} \delta_\ell^2 \right)^2 \le \sum_{\ell>0} \delta_\ell^2
\end{aligned}
\end{equation}
since $\sum_{\ell>0} \delta_\ell^2 \le 1$.
Finally, returning to $\left|\ket{\psi_0^k} - \ket{\psi_0^{k-1}}\right|^2$, since $\kappa_k \le 1$ (this is true because $\|V_k\| = 1$), the difference between the ground states of the two Hamiltonians is at most
\begin{equation}
\left|\ket{\psi_0^k} - \ket{\psi_0^{k-1}}\right|^2 \le \frac{2\lambda^2}{\gamma^2}
\end{equation}

This means that the overlap probability between the ground states of any two adjacent Hamiltonians is $|\braket{\psi_0^{k+1}}{\psi_0^k}|^2 \ge 1 - \frac{ \lambda^2 }{ \gamma^2}$. Across $M$ segments ($M-1$ transitions), the success probability is at least $\left( 1 - \frac{ \lambda^2 }{ \gamma^2} \right)^{M - 1}$.
If we wish for the failure probability to be at most some fixed $0 < \epsilon < 1$, we must have
\begin{equation}
\begin{aligned}
&\left( 1 - \frac{ \lambda^2 }{ \gamma^2} \right)^{ M-1 } > 1 - \epsilon \\
&\frac{ \lambda }{ \gamma } < \sqrt{1 - \exp\left( \frac{\log(1-\epsilon)}{ M-1 } \right)}.
\end{aligned}
\end{equation}
\end{proof}

If we can only prepare the ground state $\ket{\psi_0}$ of the original Hamiltonian, the success probability has an additional factor $|\braket{\psi_0^1}{\psi_0}|^2$. In this case, we can apply \lem{failureprob} with $\|H - H_1\|$ included in the maximization for $\lambda$. Further, since $\gamma = \min_k (E_1^k - E_0^k) \le E_1 - E_0 - 2 \lambda | \bra{\psi_0} H_k - H\ket{\psi_0} | \le E_1 - E_0 - 2\lambda $, where $E_1 - E_0$ is the gap between the ground and first excited states of $H$, we need
\begin{equation}
\begin{aligned}
\frac{ \lambda }{ E_1 - E_0 - 2\lambda } <& \sqrt{1 - \exp\left( \frac{\log(1-\epsilon)}{ M } \right)}.
\end{aligned}
\end{equation}
Provided that this occurs, we stay in the ground state of each Hamiltonian throughout the simulation with probability $1-\epsilon$. In this case, the total accumulated phase is
\begin{equation}
\frac{\bra{\psi^{\lceil M_j \rceil}_0} e^{-i H_{\lceil M_j \rceil} \Delta t} \ket{\psi^{\lceil M_j \rceil - 1}_0} \cdots \bra{\psi^{2}_0} e^{-i H_2 \Delta t} \ket{\psi^{1}_0} \bra{\psi^{1}_0} e^{-i H_1 \Delta t} \ket{\psi_0}}{ \braket{\psi^{\lceil M_j \rceil}_0}{\psi^{\lceil M_j \rceil - 1}_0} \cdots \braket{\psi^{2}_0}{\psi^{1}_0} \braket{\psi^{1}_0}{\psi_0}} = \exp\left( -i\sum_{k=1}^{\lceil M_j \rceil} E_k^0 \Delta t \right),
\end{equation}
where $\Delta t = M_j t / \lceil M_j \rceil$.

\subsection{Phase shifts due to Hamiltonian errors}

We can generalize the analysis of the difference in the phase by determining the difference between the desired (adiabatic) unitary and the true one. Evolving under $M$ random Hamiltonians in sequence, the unitary applied for each new Hamiltonian $H_k$ is 
\begin{equation}
\begin{aligned}
U_k = \exp(-iH_k \Delta t) = \sum_\ell \ket{\psi_\ell^k} \bra{\psi_\ell^k} e^{-i E^k_\ell \Delta t},
\end{aligned}
\end{equation}
while the adiabatic unitary we would ideally apply (the unitary, assuming we remain in the same eigenstate) is
\begin{equation}
\begin{aligned}
U_{k,\text{ad}} = \sum_\ell \ket{\psi_\ell^{k+1}} \bra{\psi_\ell^k} e^{-i E^k_\ell \Delta t}.
\end{aligned}
\end{equation}
The difference between the two is that true time evolution $U_k$ under $H_k$ applies phases to the eigenstates of $H_k$, while the adiabatic unitary $U_{k,\text{ad}}$ applies the eigenphase, and then maps each eigenstate of $H_k$ to the corresponding eigenstate of $H_{k+1}$. This means that if the system begins in the ground state of $H_1$, the phase which will be applied to it by the sequence $U_{\lceil M_j \rceil,\text{ad}} U_{\lceil M_j \rceil - 1,\text{ad}} \cdots U_{2,\text{ad}} U_{1,\text{ad}}$ is proportional to the sum of the ground state energies of each Hamiltonian in that sequence. By comparison, $U_{\lceil M_j \rceil} U_{\lceil M_j \rceil - 1} \cdots U_{2} U_{1}$ will include contributions from many different eigenstates of the different Hamiltonians $H_k$.

We can bound the difference between the unitaries $U_k$ and $U_{k,\text{ad}}$ as follows.

\begin{lemma}
Let $P_0^k$ be the projector onto the ground state of $H_k$, $\ket{\psi_0^k}$, and let the assumptions of \lem{failureprob} hold. The difference between the eigenvalues of $U_k P_0^k = \exp(-i H_k\Delta t) P_0^k = \sum_\ell \ket{\psi_\ell^k} \bra{\psi_\ell^k} e^{-iE_\ell^k \Delta t} P_0^k$ and $U_{k,\text{ad}} P_0^k = \ket{\psi_0^{k+1}} \bra{\psi_0^k} e^{-i E^k_\ell \Delta t} P_0^k$, where $\Delta t$ is the simulation time, is at most
$$\| (U_k - U_{k,\text{ad}}) P_0^k \| \le \frac{2\lambda^2}{(\gamma - 2\lambda)^2}.$$
\label{lem:phaseUk}
\end{lemma}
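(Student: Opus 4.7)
The plan is to unfold the operator $(U_k - U_{k,\text{ad}})P_0^k$ into a simple rank-one form, reducing the statement to a bound on the angle between the ground states of $H_k$ and $H_{k+1}$, which I then control through a Davis--Kahan (or equivalently Riesz-projector) argument that treats both Hamiltonians simultaneously and thereby generates the denominator $\gamma - 2\lambda$.

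The first step is to substitute $P_0^k = \ket{\psi_0^k}\bra{\psi_0^k}$ into the spectral sums defining $U_k$ and $U_{k,\text{ad}}$. Only a single term in each sum survives, giving
\begin{equation*}
(U_k - U_{k,\text{ad}})P_0^k \;=\; e^{-iE_0^k\Delta t}\bigl(\ket{\psi_0^k} - \ket{\psi_0^{k+1}}\bigr)\bra{\psi_0^k}.
\end{equation*}
This rank-one operator has a single nonzero singular value $\|\ket{\psi_0^k} - \ket{\psi_0^{k+1}}\|$ and a single nonzero eigenvalue $e^{-iE_0^k\Delta t}\bigl(1 - \braket{\psi_0^k}{\psi_0^{k+1}}\bigr)$. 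After fixing the global phase of $\ket{\psi_0^{k+1}}$ so that $\braket{\psi_0^k}{\psi_0^{k+1}}$ is real and nonnegative, one has $\|\ket{\psi_0^k} - \ket{\psi_0^{k+1}}\|^2 = 2(1-\braket{\psi_0^k}{\psi_0^{k+1}}) \le 2\sin^2\Theta$, where $\sin\Theta := \|P_0^{k+1} - P_0^k\|$ is the sine of the principal angle between the two one-dimensional ground spaces.

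The second step is to bound $\sin\Theta$ via the Davis--Kahan $\sin\Theta$ theorem, or equivalently by the Riesz-projector contour-integral identity $P_0^k = -\tfrac{1}{2\pi i}\oint_C (z-H_k)^{-1}\,dz$ combined with the second resolvent identity. Writing $V := H_{k+1} - H_k$ with $\|V\|\le\lambda$, this yields $\sin\Theta \le \|V\|/\delta$, where $\delta$ is the spectral gap separating the ground eigenvalue from the rest of the spectrum uniformly along the segment. Since Weyl's inequality implies that every eigenvalue of $H_{k+1}$ lies within $\lambda$ of the corresponding eigenvalue of $H_k$, the effective gap usable at both endpoints is $\delta \ge \gamma - 2\lambda$; this is precisely the denominator appearing in the statement. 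Combining $\sin\Theta \le \lambda/(\gamma-2\lambda)$ with $\|\ket{\psi_0^k} - \ket{\psi_0^{k+1}}\|^2 \le 2\sin^2\Theta$ delivers the advertised bound $2\lambda^2/(\gamma-2\lambda)^2$.

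The main obstacle I expect is the bookkeeping of powers of $\lambda$: a naive first-order perturbative estimate of the raw operator norm $\|(U_k - U_{k,\text{ad}})P_0^k\|$ is only linear in $\lambda$, so the quadratic bound must be read as a bound on the squared displacement $\|\ket{\psi_0^k} - \ket{\psi_0^{k+1}}\|^2$ (equivalently on the magnitude of the eigenvalue shift $|1-\braket{\psi_0^k}{\psi_0^{k+1}}|$, which saturates at $\sin^2\Theta$ and is therefore quadratic in $\lambda$). A secondary subtlety is that the resolvents of both $H_k$ and $H_{k+1}$ must be controlled on the same contour, which requires the implicit constraint $\gamma > 2\lambda$ already inherited from the assumptions of \lem{failureprob}.
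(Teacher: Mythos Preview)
Your argument is correct and matches the paper's: both reduce to bounding $\|\ket{\psi_0^k}-\ket{\psi_0^{k+1}}\|^2$ by $2\lambda^2/(\gamma-2\lambda)^2$, the paper via explicit first-order perturbation theory on the overlaps $\Delta_{p0}=\braket{\psi_p^{k+1}}{\psi_0^k}$ (recycling the computation from \lem{failureprob}) and you via Davis--Kahan, which is the same estimate cleanly packaged. You also correctly diagnose that the quadratic right-hand side is a bound on the squared displacement (equivalently the eigenvalue shift $1-\braket{\psi_0^k}{\psi_0^{k+1}}$) rather than on the raw operator norm, which the paper's own derivation confirms since what it actually bounds is $\sum_p|\Delta_{p0}|^2=\|\ket{\psi_0^k}-\ket{\psi_0^{k+1}}\|^2$.
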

\begin{proof}
First, we expand the true unitary using the resolution of the identity $\sum_p \ket{\psi_p^{k+1}} \bra{\psi_p^{k+1}}$, the eigenstates of the next Hamiltonian, $H_{k+1}$: 
\begin{equation}
\begin{aligned}
U_{k} = \sum_{p, \ell} \ket{\psi_p^{k+1}} \braket{\psi_p^{k+1}}{\psi_\ell^k} \bra{\psi_\ell^k} e^{-iE_\ell^k \Delta t}.
\end{aligned}
\end{equation}
Let $\Delta_{p\ell} = \braket{\psi_p^{k+1}}{\psi_\ell^k}$ for $p\neq \ell$ and $1 + \Delta_{pp} = \braket{\psi_p^{k+1}}{\psi_p^k}$ when $p=\ell$. In a sense we are writing the new eigenstate $\ket{\psi_p^{k+1}}$ as a slight shift from the state $\ket{\psi_\ell^k}$: this is the reason that we choose $\braket{\psi_p^{k+1}}{\psi_p^k} = 1 + \Delta_{pp}$. Using this definition, we can continue to simplify $U_k$, as
\begin{equation}
\begin{aligned}
U_{k} = \sum_{p} (1 + \Delta_{pp}) \ket{\psi_p^{k+1}} \bra{\psi_p^k} e^{-iE_p^k \Delta t} + \sum_{p\neq\ell} \Delta_{p\ell} \ket{\psi^{k+1}_p} \bra{\psi^k_\ell} e^{-i E_\ell^k \Delta t}.
\end{aligned}
\end{equation}
We are now well-positioned to bound $\| (U_k - U_{k,\text{ad}}) P_0^k \|$. Noting that $U_{k,\text{ad}}$ exactly equals the $1$ in the first sum in $U_k$,
\begin{equation}
\begin{aligned}
\| (U_k - U_{k,\text{ad}}) P_0^k \| =& \left\| \sum_{p} \Delta_{p\ell} \ket{\psi^{k+1}_p} \bra{\psi^k_\ell} e^{-i E_\ell^k \Delta t} \ket{\psi^k_0} \bra{\psi^k_0} \right\| \\
=& \max_{\ket\psi} \left| \sum_{p\ell} \Delta_{p\ell} \ket{\psi^{k+1}_p} \bra{\psi^k_\ell} e^{-i E_\ell^k \Delta t} \ket{\psi^k_0} \braket{\psi^k_0}\psi \right|^2 \\
=& \left| \sum_{p} \Delta_{p0} e^{-i E_0^k \Delta t} \right|^2 \\
\le& \sum_{p} |\Delta_{p0}|^2.
\end{aligned}
\end{equation}

The final step in bounding $\| U_k - U_{k,\text{ad}} \|$ is to bound $|\Delta_{p\ell}| = \braket{\psi_p^{k+1}}{\psi_\ell^k}$, similarly to how we bounded $\delta_{p\ell}$. For $p\neq\ell$, $\Delta_{p\ell}$ is given by
\begin{equation}
\begin{aligned}
|\Delta_{p\ell}|^2 = |\braket{\psi_{p}^{k+1}}{\psi_\ell^k}|^2 \le& \lambda^2 \frac{|\bra{\psi_p^k} V_k \ket{\psi_\ell^k}|^2}{(E_p^k - E_\ell^k)^2} .
\end{aligned}
\end{equation}
So, as with our bounds on $|\delta_0|^2$ and $\sum_{\ell>0} |\delta_\ell|^2$ in \lem{failureprob}, $\| (U_k - U_{k,\text{ad}}) P_0^k \|$ is upper bounded by
\begin{equation}
\begin{aligned}
\sum_p |\Delta_{p0}|^2 = \sum_p |\braket{\psi_{p}^{k+1}}{\psi_0^k}|^2 \le& \sum_p  \lambda^2 \frac{|\bra{\psi_p^k} V_k \ket{\psi_0^k}|^2}{(E_p^k - E_0^k)^2} \le \frac{2\lambda^2}{(\gamma - 2\lambda)^2},
\end{aligned}
\end{equation}
which completes the proof.
\end{proof}

We are now able to prove our main theorem, \thm{main}, which immediately follows from the prior results.

%\main*
\begin{proofof}{\thm{main}}
\lem{phaseUk} gives the difference between eigenvalues of $U_k P_0^k$ and $U_{k,\text{ad}} P_0^k$. Across the entire sequence, we have
\begin{equation}
\begin{aligned}
\left\| U_{M} P_0^{M} \cdots U_k P_0^k \cdots U_1 P_0^1 - U_{M,\text{ad}} P_0^{M} \cdots U_{k,\text{ad}} P_0^k \cdots U_{1,\text{ad}} P_0^1 \right\| \le \frac{2 M \lambda^2}{(\gamma - 2\lambda)^2}.
\end{aligned}
\end{equation}
This is the maximum possible difference between the accumulated phases for the ideal and actual sequences, assuming the system leaves the ground state for at most one repetition at a time.

The probability of leaving the ground state as part of a Landau-Zener process instigated by the measurement at adjacent values of the Hamiltonians is, under the assumptions of~\lem{failureprob}, that the failure probability occuring at each projection is $\epsilon$  if

\begin{equation}
 \frac{ \lambda }{ \gamma } < \sqrt{1 - \exp\left( \frac{\log(1-\epsilon)}{ M-1 } \right)},
\end{equation}
thus the result follows trivially from these two results.
\end{proofof}

\section{Importance sampling}
\label{app:importancesampling}
Here we prove our main lemma regarding importance sampling.
\begin{proofof}{\lem{robust}}
The proof is a straightforward exercise in the triangle inequality once one uses the fact that $|\delta_j| \le |F(j)|/2$ and the fact that $1/(1-|x|)\le 1 +2|x|$ for all $x\in [-1/2,1/2]$.

First note that because importance sampling introduces no bias, $\mathbb{E}_f(F) = \mathbb{E}(F)$, and
\begin{equation}
\mathbb{V}_f(F) = \mathbb{E}\left(\left(\frac{F}{N f}\right)^2 \right) - \mathbb{E}(F)^2 = \sum_j f(j) \left(\frac{F(j)}{N f(j)}\right)^2 -\mathbb{E}(F)^2.
\end{equation}
Next recall that $f = |\widetilde{F}|/\sum_k |\widetilde{F}(k)|$ with $|\widetilde{F}(j)| = |F(j)| +\delta_j$.  From this we find that 
\begin{align}
\mathbb{V}_f(F) &= \frac{1}{N^2}\left(\sum_k|F(k)| + \delta_k \right)\left(\sum_j \frac{F^2(j)}{|F(j)| + \delta_j}\right) - \left(\mathbb{E}(F)\right)^2\nonumber\\
&\le\frac{1}{N^2}\left(\sum_k|F(k)| + \delta_k \right)\left(\sum_j \frac{F^2(j)}{|F(j)| - |\delta_j|}\right) - \left(\mathbb{E}(F)\right)^2\nonumber\\
&\le \frac{1}{N^2}\left(\sum_k|F(k)| + |\delta_k| \right)\left(\sum_j |F(j)| + 2|\delta_j|\right) - \left(\mathbb{E}(F)\right)^2\nonumber\\
&= \frac{1}{N^2}\left(\sum_k|\delta_k| \right)\left(\sum_j |F(j)| + 2|\delta_j|\right)+ \frac{1}{N^2}\left(\sum_k|F(k)| \right)\left(2\sum_j|\delta_j|\right) +\left(\mathbb{E}(|F|)\right)^2- \left(\mathbb{E}(F)\right)^2\nonumber\\
&\le\frac{4}{N^2}\left(\sum_k|\delta_k| \right)\left(\sum_j |F(j)|\right)+\mathbb{V}_{f_{\rm opt}}(F).
\end{align}
\end{proofof}

This bound is tight in the sense that as $\max_k |\delta_k|\rightarrow 0$ the upper bound on the variance converges to $\left(\mathbb{E}(|F|)\right)^2- \left(\mathbb{E}(F)\right)^2$, which is the optimal attainable variance.

\end{document}